\newtheorem{theorem}{Theorem}[section]
\newtheorem{corollary}[theorem]{Corollary}
\newtheorem{lemma}[theorem]{Lemma}
\newtheorem{observation}[theorem]{Observation}
\newtheorem{proposition}[theorem]{Proposition}
\theoremstyle{definition}
\newtheorem{claim}[theorem]{Claim}
\newcommand{\set}[1]{\left\{ #1 \right\}}
\newcommand{\pset}{{\mathcal{P}}}
\newcommand{\dset}{{\mathcal{D}}}
\newcommand{\fset}{{\mathcal{F}}}
\newcommand{\eps}{{\varepsilon}}
\newcommand{\diam}{\textnormal{\textsf{diam}}}
\newcommand{\dist}{\textnormal{\textsf{dist}}}
\newcommand{\gir}{\textnormal{\textsf{girth}}}
\newcommand\vol{\mathsf{cost}}
\def\ceil#1{\left\lceil #1 \right\rceil}
\def\card#1{\left| #1 \right|}
\def\deg{\textrm{deg}}
\def\pr#1{\mathrm{Pr}\left[ #1 \right]}
\def\ex#1{{\mathbb{E}}\left[ #1 \right]}
\newcounter{note}
\newcommand{\zesn}{\mathsf{0EwSN}}
\newcommand{\ze}{$0$-$\textsf{Ext}$\xspace}
\begin{document}

\begin{titlepage}
	
	\title{Lower Bounds on $0$-Extension with Steiner Nodes}
	
	\author{Yu Chen\thanks{EPFL, Lausanne, Switzerland. Email: {\tt yu.chen@epfl.ch}. Supported by ERC Starting Grant 759471.} \and Zihan Tan\thanks{Rutgers University, NJ, USA. Email: {\tt zihantan1993@gmail.com}. Supported by a grant to DIMACS from the Simons Foundation (820931).}} 
	
	\maketitle

	\thispagestyle{empty}
	\begin{abstract}
		
In the \emph{$0$-Extension problem}, we are given an edge-weighted graph $G=(V,E,c)$, a set $T\subseteq V$ of its vertices called terminals, and a semi-metric $D$ over $T$, and the goal is to find an assignment $f$ of each non-terminal vertex to a terminal, minimizing the sum, over all edges $(u,v)\in E$, the product of the edge weight $c(u,v)$ and the distance $D(f(u),f(v))$ between the terminals that $u,v$ are mapped to.
Current best approximation algorithms on $0$-Extension are based on rounding a linear programming relaxation called the \emph{semi-metric LP relaxation}. The integrality gap of this LP, with best upper bound $O(\log |T|/\log\log |T|)$ and best lower bound $\Omega((\log |T|)^{2/3})$, has been shown to be closely related to the best quality of cut and flow vertex sparsifiers.

We study a variant of the $0$-Extension problem where Steiner vertices are allowed. Specifically, we focus on the integrality gap of the same semi-metric LP relaxation to this new problem. 
Following from previous work, this new integrality gap turns out to be closely related to the quality achievable by cut/flow vertex sparsifiers with Steiner nodes, a major open problem in graph compression. 
Our main result is that the new integrality gap stays superconstant $\Omega(\log\log |T|)$ even if we allow a super-linear $O(|T|\log^{1-\varepsilon}|T|)$ number of Steiner nodes.

	\end{abstract}
\end{titlepage}

\section{Introduction}

In the \emph{$0$-Extension} problem (\ze), we are given an undirected edge-weighted graph $G=(V,E,c)$, a set $T\subseteq V$ of its vertices called \emph{terminals}, and a metric $D$ on terminals, and the goal is to find a mapping $f: V\to T$ that maps each vertex to a terminal in $T$, such that each terminal is mapped to itself (i.e., $f(t)=t$ for all $t\in T$), and the sum $\sum_{(u,v)\in E}c{(u,v)}\cdot D(f(u),f(v))$ is minimized.

The \ze problem was first introduced by Karzanov \cite{karzanov1998minimum}. It is a generalization of the \emph{multi-way cut} problem (by setting $D(t,t')=1$ for all pairs $t,t'\in T$) \cite{dahlhaus1994complexity,cualinescu1998improved,freund2000lower,buchbinder2013simplex,angelidakis2017improved,buchbinder2017simplex,berczi2020improving}, and a special case of the \emph{metric labeling} problem \cite{kleinberg2002approximation,chekuri2004linear,archer2004approximate,karloff2006earthmover,chuzhoy2007hardness}.
C\u{a}linescu, Karloff and Rabani \cite{calinescu2005approximation} gave the first approximation algorithm for \ze, achieving a ratio of $O(\log |T|)$, by rounding the solution of a \emph{semi-metric LP relaxation} (\textsf{LP-Metric}), which is presented below.
\begin{eqnarray*}
	\mbox{(\textsf{LP-Metric})}\quad &  \text{minimize}\quad  \sum_{(u,v)\in E}c(u,v)\cdot \delta(u,v)\\
	& s.t. \quad (V,\delta) \text{ is a semi-metric space}\\
	& \delta(t,t')=D(t,t'), \quad\forall t,t'\in T
\end{eqnarray*}
Fakcharoenphol, Harrelson, Rao and
Talwar \cite{fakcharoenphol2003improved} later gave a modified rounding algorithm on the same LP, improving the ratio to $O(\log |T|/\log\log |T|)$, which is the current best-known approximation. On the other hand, this LP was shown to have integrality gap $\Omega(\sqrt{\log |T|})$ \cite{calinescu2005approximation}, and this was recently improved to $\Omega((\log |T|)^{2/3})$ by Schwartz and Tur \cite{schwartz2021metric}.
Another LP relaxation called \emph{earthmover distance relaxation} (\textsf{LP-EMD}) was considered by Chekuri, Khanna, Naor and Zosin \cite{chekuri2004linear} and utilized to obtain an $O(\log |T|)$-approximation of the metric labeling problem (and therefore also the \ze problem). It has been shown \cite{karloff2009earthmover} by Karloff, Khot, Mehta and Rabani that this LP relaxation has an integrality gap $\Omega(\sqrt{\log |T|})$. Manokaran, Naor, Raghavendra
and Schwartz \cite{manokaran2008sdp} showed that the integrality gap of this LP relaxation leads to the same hardness of approximation result, assuming the Unique Game Conjecture.

In addition to being an important problem on its own, the \ze problem and its two LP relaxations are also closely related to the construction of \emph{cut/flow vertex sparsifiers}, a central problem in the paradigm of graph compression.
Given a graph $G$ and a set $T\subseteq V(G)$ of terminals, a cut sparsifier of $G$ with respect to $T$ is a graph $H$ with $V(H)=T$, such that for every partition $(T_1,T_2)$ of $T$, the size of the minimum cut separating $T_1$ from $T_2$ in $G$ and the size of the minimum cut separating $T_1$ from $T_2$ in $H$, are within some small factor $q$, which is also called the \emph{quality} of the sparsifier\footnote{flow sparsifiers has a slightly more technical definition, which can be found in \cite{hagerup1998characterizing,leighton2010extensions,chuzhoy2012vertex,andoni2014towards}.}.
Moitra \cite{moitra2009approximation} first showed that every graph with $k$ terminals admits a cut sparsifier with quality bounded by the \emph{integrality gap} of its \textsf{LP-Metric} (hence $O(\log k/\log\log k)$). Later on, Leighton and Moitra \cite{leighton2010extensions}, and Makarychev and Makarychev \cite{makarychev2010metric} concurrently obtained the same results for flow sparsifiers, and then Charikar, Leighton, Li and Moitra \cite{charikar2010vertex} showed that the best flow sparsifiers can be computed by solving an LP similar to \textsf{LP-EMD}. On the lower bound side, it was shown after a line of work \cite{leighton2010extensions,charikar2010vertex,makarychev2010metric} that there exist graphs with $k$ terminals whose best flow sparsifier has quality $\tilde\Omega(\sqrt{\log k})$.

A major open question on vertex sparsifiers is: 

\vspace{-10pt}
\[\emph{\underline{Q1.} Can better quality sparsifiers be achieved by allowing a small number of Steiner vertices}? 
\]
\vspace{-10pt}

In other words, what if we no longer require that the sparsifier $H$ only contain terminals, but just require that $H$ contain all terminals and its size be bounded by some function $f$ on the number of terminals (for example, $f(k)=2k,k^2$ or even $2^k$)? 
Chuzhoy \cite{chuzhoy2012vertex} constructed $O(1)$-quality cut/flow sparsifiers with size dependent on the number of terminal-incident edges in $G$.
Andoni, Gupta and Krauthgamer \cite{andoni2014towards} showed the construction for $(1+\eps)$-quality flow sparsifiers for quasi-bipartite graphs. For general graphs, they constructed a sketch of size $f(k,\eps)$ that stores all feasible multicommodity flows up to a factor of $(1+\eps)$, raising the hope for a special type of $(1+\eps)$-quality flow sparsifier, called contraction-based flow sparsifiers, of size $f(k,\eps)$ for general graphs, which was recently invalidated by Chen and Tan \cite{chen20241+}, who showed that contraction-based flow sparsifiers whose size are bounded by any function $f(k)$ must have quality $1+\Omega(1)$. But it is still possible for such flow sparsifiers with constant quality and finite size to exist. Prior to this work, Krauthgamer and Mosenzon \cite{krauthgamer2023exact} showed that there exist $6$-terminal graphs $G$ whose quality-$1$ flow sparsifiers must have an arbitrarily large size.

Given the concrete connection between the \ze problem and  cut/flow sparsifiers, it is natural to ask a similar question for \ze: 

\vspace{-10pt}
\[\emph{\underline{Q2.} Can better approximation of \textnormal{\ze} be achieved by allowing a small number of Steiner vertices?}\]
\vspace{-10pt}

In this paper, we formulate and study the following variant of the \ze problem, which we call the \emph{0-Extension with Steiner Nodes} problem ($\zesn$). (We note that a similar variant was mentioned in \cite{andoni2014towards}, and we provide a comparison between them in more detail in \Cref{sec: compare}.)
Upfront we are also given a function $f: \mathbb{Z}\to \mathbb{Z}$ with $f(k)\ge k$ holds for all $k\in \mathbb{Z}$, specifying the number of Steiner vertices allowed.

\vspace{-10pt}

\paragraph{0-Extension with Steiner Nodes.}
In an instance of $\zesn(f)$, the input consists of an edge-weighted graph $G=(V,E,c)$, a subset $T\subseteq V$ of $k$ vertices, that we call \emph{terminals}, and a metric $D$ on terminals in $T$, which is exactly the same as \ze.
%We denote by $\dist_{\ell}(\cdot,\cdot)$ the shortest-path (in $G$) distance metric on $V$ induced by the lengths $\{\ell_e\}_{e\in E}$.
A solution to the instance $(G,T,D)$ consists of 
\begin{itemize}
\item a partition $\fset$ of $V$ with $|\fset|\le f(k)$, such that distinct terminals of $T$ belong to different sets in $\fset$; we call sets in $\fset$ \emph{clusters}, and for each vertex $u\in V$, we denote by $F(u)$ the cluster in $\fset$ that contains it;
\item a semi-metric $\delta$ on the clusters in $\fset$, such that for each pair $t,t'\in T$, $\delta(F(t),F(t'))= D(t,t')$.
\end{itemize}
We define the \emph{cost} of a solution $(\fset,\delta)$ as $\vol(\fset,\delta)=\sum_{(u,v)\in E}c(u,v)\cdot\delta(F(u),F(v))$, and its \emph{size} as $|\fset|$.
The goal is to compute a solution $(\fset,\delta)$ with size at most $f(k)$ and minimum cost. 

The difference between $\zesn(f)$ and \ze is that, instead of enforcing every vertex to be mapped to a terminal, in $\zesn(f)$ we allow vertices to be mapped $(f(k)-k)$ non-terminals (or Steiner nodes), which are the clusters in $\fset$ that do not contain terminals. 
We are also allowed to manipulate the distances between these non-terminals, conditioned on not destroying the induced metric $D$ on terminals. Clearly, when $f(k)=k$, the $\zesn(f)$ problem degenerates to the \ze problem.
%This additional freedom makes it possible to get around the $\Omega(\sqrt{\log k})$ integrality gap

%We still require that a metric be defined on these clusters, such that the induced metric on terminal clusters (clusters in $\fset$ that contain terminals) is exactly $D$. 

It is easy to see that (\textsf{LP-Metric}) is still an LP relaxation for $\zesn(f)$, as each solution $(\fset,\delta)$ to $\zesn$ naturally corresponds to a semi-metric $\delta'$ on $V$ (where we can set $\delta'(u,u')=\delta(F(u),F(u'))$ for all pairs $u,u'\in V$). 
Denote by $\textsf{IG}_f(k)$ the worst integrality gap for (\textsf{LP-Metric}) to any $\zesn(f)$ instance with at most $k$ terminals.
In fact, similar to the connection between the integrality gap of (\textsf{LP-Metric}) and the quality achievable by flow sparsifiers \cite{moitra2009approximation,leighton2010extensions},
%following from the results in \cite{andoni2014towards}\footnote{This follows from LP(2) and Proposition 4.2 of \cite{andoni2014towards}. See \Cref{sec: compare} for a detailed discussion.}, 
it was recently shown by Chen and Tan \cite{chen20241+} that the value of $\textsf{IG}_f(k)$ is also closely related to the quality achievable by flow sparsifiers with Steiner nodes.
Specifically, for any function $f$, every graph $G$ with $k$ terminals has a quality-$\big((1+\eps)\cdot \textsf{IG}_f(k)\big)$ flow sparsifier with size bounded by $(f(k))^{(\log k/\eps)^{k^2}}$. 
This means that any positive answer to question Q2 (by proving that $\textsf{IG}_f(k)=o(\log k/\log\log k)$ for some $f$) also gives a positive answer to question Q1.

This makes it tempting to study the $\zesn$ problem. Specifically, can we prove any better-than-$O(\log k/\log\log k)$ upper bound for $\textsf{IG}_f(k)$, for any function $f$?
To the best of our knowledge, no such bound is known for any $f$, leaving the problem wide open. 
%Therefore, it is natural to ask if the additional freedom in solutions can significantly improve the old $\Omega(\sqrt{\log k})$ integrality gap. Specifically, for some small function $f$ such as $f(k)=O(k)$, can we hope to improve the integrality gap of (\textsf{LP-Metric}) to $\zesn(f)$ to $O(1)$?
In fact, no non-trivial lower bound on $\textsf{IG}_f(k)$ is known for even very small function like $f(k)=O(k)$.

\subsection{Our Results}

In this paper, we make a first step in investigating the value of  $\textsf{IG}_f(k)$, by giving a superconstant lower bound on $\textsf{IG}_f(k)$ for near-linear functions $f$.
Our main result is summarized in the following theorem.
%\Cref{thm: 0-ext with Steiner lower}

\begin{theorem}
\label{thm: 0-ext with Steiner lower}
For any $0<\eps<1$ and any size function $f:\mathbb{Z}^+\to \mathbb{Z}^+$ with $f(k)=O(k\log^{1-\eps} k)$, 
the integrality gap of the LP-relaxation (\textnormal{\textsf{LP-Metric}}) is $\textnormal{\textsf{IG}}_f(k)=\Omega(\eps \log\log k)$.
%for every sufficiently large $k$, there exists an instance $(G,T,D)$ of $\zesn$ with $|T|=k$ and size bound $f(k)$, such that the integrality gap of  (\textnormal{\textsf{LP-Metric}}) is $\Omega(\eps \log\log k)$.
\end{theorem}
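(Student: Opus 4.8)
The plan is to exhibit, for each large $k$, a single hard instance $(G,T,D)$ with $|T|=k$ on which \textsf{LP-Metric} has value $O(1)$ while every feasible $\zesn(f)$ solution $(\fset,\delta)$ with $|\fset|\le f(k)=O(k\log^{1-\eps}k)$ has $\vol(\fset,\delta)=\Omega(\eps\log\log k)$. The instance is built hierarchically (equivalently, recursively) with $r=\Theta(\log\log k)$ levels, level $i$ living at a geometrically growing ``scale'' $s_i\approx 2^{i}$. Starting from a constant-size base gadget, the level-$i$ instance $I_i$ is obtained from $I_{i-1}$ by taking an expander-type \ze gadget with a constant integrality gap, substituting a fresh copy of $I_{i-1}$ for each of its terminals, wiring the gadget edges to distinguished boundary vertices, and scaling all edge weights/lengths so that the construction is cost-self-similar across levels. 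With the number of terminals growing multiplicatively per level and roughly $s_i$ new internal vertices introduced per terminal at level $i$, the final graph has $|V(G)|=\Theta(k\log k)$, which is strictly (and necessarily, since $f(k)\ge |V(G)|$ would force integrality gap $1$) larger than $f(k)$; we let $D$ be the shortest-path metric that the hierarchy induces on $T$.

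For the LP upper bound I would take $\delta$ to be the shortest-path metric of $G$ under the scaled edge lengths: this is feasible because it extends $D$ by construction, and the geometric scaling is chosen precisely so that each of the $r$ levels contributes an equal $O(1)$ share to $\sum_{(u,v)\in E}c(u,v)\,\delta(u,v)$; after normalization the LP optimum is $O(1)$. The structural point making this work is that the hierarchical metric is ``locally cheap'': within any gadget copy, the distances that matter are small relative to the edge weights placed at that level.

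The lower bound I would prove by induction on the number of levels, with the inductive hypothesis quantifying how the Steiner budget dissipates down the hierarchy: a solution for $I_i$ using $S$ Steiner nodes costs at least $c\cdot\bigl(i-h(S)\bigr)$ times the per-level-normalized LP value, for a suitable function $h$. In the inductive step one separates the contribution of the top-level gadget from those of the many substituted copies of $I_{i-1}$; a pigeonhole/averaging argument shows all but a small fraction of these copies receive only a $\approx 1/(\#\text{copies})$ share of the $S$ Steiner nodes, so by induction each such copy already contributes $\gtrsim c(i-1-h(\cdot))$, while the additional $\Omega(1)$ from the current level comes from the constant integrality gap of the level-$i$ expander gadget, which is robust to the few Steiner nodes it can be allotted because neutralizing an expander of size $m$ requires $\Omega(m)$ Steiner nodes and the level-$i$ gadget has size growing with $s_i$. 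Tracking the dissipation, neutralizing the top $j$ levels requires $\Omega(k\cdot\gamma^{j})$ Steiner nodes for a constant $\gamma>1$ (each level sits at a geometrically larger scale), so a budget $f(k)=O(k\log^{1-\eps}k)$ neutralizes at most $(1-\eps)\log\log k+O(1)$ levels, leaving $\Omega(\eps\log\log k)$ levels each contributing $\Omega(1)$ to the cost, which yields \Cref{thm: 0-ext with Steiner lower}.

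The main obstacle is this lower-bound induction: an adversarial clustering need not respect the hierarchy, so one must show via an expansion/isoperimetry argument at each level that mixing vertices across levels or across gadget copies either inherits the same cost up to cheap corrections or is charged for the violation, and one must make the Steiner-budget bookkeeping tight enough that the loss stays $\Omega(1)$ per surviving level rather than degrading geometrically. Calibrating the scales $s_i$, the edge weights, and the form of $h$ so that exactly a $\Theta(\eps)$-fraction of the $\Theta(\log\log k)$ levels survives a near-linear Steiner budget is the delicate part of the argument.
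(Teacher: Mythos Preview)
Your proposal takes a genuinely different route from the paper: the paper's hard instance is a \emph{single} high-girth $6$-regular expander on $n\approx k\log k/\log\log k$ vertices, with an arbitrary $k$-subset as terminals and $D$ the shortest-path metric. There is no hierarchy and no recursion. The lower bound goes in two steps: first, ``canonical'' solutions---those in which every cluster is pinned to an actual vertex of $G$ and $\delta$ equals the graph metric on those centers---are shown to cost $\Omega(\eps n\log\log n)$ by a direct counting/expansion argument; second, an \emph{arbitrary} solution $(\fset,\delta)$ is reduced to a canonical one by a random mapping into the continuization of $G$, with expected distortion $O(\diam(G)/\gir(G))=O(1)$.

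The obstacle you flag at the end is real, and I do not see your framework overcoming it. You phrase the difficulty as the adversarial \emph{clustering} not respecting the hierarchy, but the more serious freedom is in the adversarial \emph{metric} $\delta$: a Steiner cluster is not a vertex inserted into some gadget, it is a point in an abstract semi-metric space constrained only by $\delta|_T=D$. Your per-level charge rests on ``the constant integrality gap of the level-$i$ expander gadget, which is robust to the few Steiner nodes it can be allotted,'' but expansion controls cuts, not metrics; nothing prevents $\delta$ from placing Steiner clusters at positions that shortcut the gadget's internal distances with no isoperimetric violation whatsoever. The paper's second step is exactly the missing ingredient: because $D$ is the shortest-path metric of a graph with $\gir(G)=\Theta(\diam(G))$, the tight span of $D$ locally coincides with $G$ itself, so any $\delta$ extending $D$ can be pulled back onto $G$ with $O(1)$ loss. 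A hierarchical construction gives no analogous handle on the tight span of $D$ (and if the hierarchy is too tree-like it may even land in the minor-free regime where the gap is $O(1)$), so the inductive $\Omega(1)$-per-level charge is not justified as stated.
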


%To the best of our knowledge, there is no known upper bound better than $O(\log k/\log\log k)$ for any function $f$.

We remark that our lower bound for the integrality gap of $\zesn(f)$ does not imply a size lower bound for $O(\log\log k)$-quality flow sparsifiers. However, if the same lower bound can be proved for a slightly generalized version of $\zesn(f)$, that was proposed in \cite{andoni2014towards} and analyzed in \cite{chen20241+}, then it will imply an $\Omega(k\log^{1-\eps} k)$ size lower bound for $O(\log\log k)$-quality flow sparsifiers. We provide a detailed discussion in \Cref{sec: compare}.

\subsection{Technical Overview}

We now discuss some high-level ideas in the proof of  \Cref{thm: 0-ext with Steiner lower}. Given any $k$, we will construct an unweighted graph $G$ on $n$ vertices (where $n\approx k\log k$) and $k$ terminals, and show that any solution of the $\zesn$ problem with size $O(k\log^{1-\eps} k)$ has cost lower bounded by $\Omega(\log\log k)$ times the number of edges in $G$.

Our hard instance is a constant degree expander (with an arbitrary set of its $k$ vertices as terminals). There are two main reasons to choose such a graph. First, in the previous work \cite{calinescu2005approximation} for proving the $\Omega(\sqrt{\log k})$ integrality gap lower bound for the $0$-Extension problem, an graph called ``expander with tails'' has been used. Though the tails in their construction appear useless for our purpose, as we allow a super-linear number of Steiner vertices which easily accomodate a single-edge tail for each terminal, the expander graph turns out to be still the core structure that is hard to compress.
Second, in another previous work \cite{archer2004approximate} it was shown that $0$-Extension problem on minor-free graphs has integrality gap $O(1)$, so our hard example has to contain large cliques as minor, which makes expanders, the graphs with richest structure,  a favorable choice. For technical reasons, we need some additional properties like Hamiltonian and high-girth. See \Cref{subsec: instance} for more details.

Next we want to lower bound the cost of an $\zesn$ solution of size $O(k\log^{1-\eps} k)$. This is done in two steps. We first consider a special type of solutions that really ``comes from the graph''. Recall that a solution consists of a partition $\fset$ of $V(G)$ into clusters and a metric $\delta$ on clusters in $\fset$. Specifically, in this special type of solutions, we require that each cluster $F\in \fset$ corresponds to a distinct vertex $v_F$ (called its \emph{center}) in $G$, and for all pairs $F,F'$ the metric $\delta(F,F')$ coincides with the shortest-path distance between $v_F, v_{F'}$ in $G$.
In the first step, we show that all special solutions have cost $\Omega(n\cdot \eps \log\log k)$. Intuitively, as the graph is a constant degree expander, every center $v_F$ is within distance $(\eps/100)\cdot \log\log k$ to at most $(\log k)^{\eps/10}$ other centers, but its cluster $F$ contains $(\log k)^{\eps}$ vertices on average and so it has $\Omega((\log k)^{\eps})$ inter-cluster edges. Therefore, most edges will have new length at least $(\eps/100)\cdot \log\log k$ (in $\delta$), making the total new edge length $\Omega(n \cdot \eps \log\log k)$ (as the number of inter-cluster edges in a balanced expander partition is $\Omega(n)$). Careful calculations are needed to turn these informal arguments into a rigorous proof. See \Cref{subsec: canonical} for more details.

In the second step, we show that the general solutions can actually be reduced to the special type of solutions considered in the first step, losing only an $O(1)$ factor in its cost. In fact, it has been recently shown in \cite{chen20241+} that, to analyze the cost of any $\zesn$ instance, it suffices to consider $\zesn$ solutions whose metric $\delta$ is embeddable into a geodesic structure of the terminal-induced shortest path distance metric called \emph{tight span}. Our main contribution here, on a conceptual level, is showing that \emph{for a graph with high girth, its tight span structure locally coincides with the graph structure itself}. In this sense, we can compare any solution to some special solution considered in Step 1. When the picture is lifted to the global level, in such a comparison it turns out that we will loss a factor approximately the \emph{diameter-girth ratio}, which we can manage to get $O(1)$ with an additional short-cycle-removing step in the construction of the expander. We believe this diameter-girth ratio quantifies how ``local'' a graph structure is, and should be of independent interest to other graph problems on shortest-path distances.
To carry out the technical steps, we employ a notion called \emph{continuazation} of a graph recently studied in \cite{chen2023towards}.
See \Cref{subsec: mapping} for more details.

\section{Preliminaries}

By default, all logarithms are to the base of $2$. 
%All graphs in this paper are undirected, and are allowed to have parallel edges but not self-loops.

Let $G=(V,E,\ell)$ be an edge-weighted graph, where each edge $e\in E$ has weight (or \emph{length}) $\ell_e$. 
For a vertex $v\in V$, we denote by $\deg_G(v)$ the degree of $v$ in $G$.
For each pair $S,T\subseteq V$ of disjoint subsets, we denote by $E_{G}(S,T)$ the set of edges in $G$ with one endpoint in $S$ the other endpoint in $T$.
For a pair $v,v'$ of vertices in $G$, we denote by $\dist_{G}(v,v')$ (or $\dist_{\ell}(v,v')$) the shortest-path distance between $v$ and $v'$ in $G$.
We define the \emph{diameter} of $G$ as $\diam(G)=\max_{v,v'\in V}\set{\dist_G(v,v')}$, and we define the \emph{girth} of $G$, denoted by $\gir(G)$, as the minimum weight of any cycle in $G$.
We may omit the subscript $G$ in the above notations when the graph is clear from the context.

%\paragraph{Graph expansion and a distribution of random graphs.}
%\emph{cost} of $S$ is $\vol_{G}(S)=\sum_{u\in S}\deg_{G}(u)$ and we write $\vol(G)=\vol_{G}(V)$. 
Given a graph $G$, its \emph{conductance} 
%of the cut is defined as
is defined as 
\[\Phi(G)=\min_{S\subseteq V, S \ne\emptyset, S\ne V}
\set{
\frac{|E_{G}(S,V\setminus S)|}
{\min\set{\sum_{v\in S}\deg_G(v),\sum_{v\notin S}\deg_G(v)}}
	}.\]
We say that $G$ is a \emph{$\phi$-expander} iff $\Phi(G)\ge\phi$. 

\iffalse
\paragraph{Chernoff Bound.}
We will use the following form of Chernoff bound (see e.g.\cite{dubhashi2009concentration}).
\begin{lemma}[Chernoff Bound] \label{prop:chernoff}
	Let $X_1,\dots X_n$ be independent random variables taking values in $\{0,1\}$. Let $X$ denote their sum and let $\mu=\ex{X}$ denote the sum's expected value. Then for any $\delta>0$, 
	$$
	\pr{X > (1+\delta) \mu} < \left( \frac{e^{\delta}}{(1+\delta)^{1+\delta}}\right)^{\mu}.
	$$
\end{lemma}
\fi

\section{Proof of \Cref{thm: 0-ext with Steiner lower}}

In this section we prove \Cref{thm: 0-ext with Steiner lower}. We begin by describing the hard instance in \Cref{subsec: instance}, which is essentially a high-girth expander with a subset of vertices designated as terminals. Then in \Cref{subsec: canonical} we show that a special type of solutions may not have small cost. Finally, in \Cref{subsec: mapping} we generalize the arguments in \Cref{subsec: canonical} to analyze an arbitrary solution, completing the proof of \Cref{thm: 0-ext with Steiner lower}. Some technical details in \Cref{subsec: mapping} are deferred to \Cref{subsec: diam-girth}.

%any small-sized solution to such an instance must incur a large stretch.

\subsection{The Hard Instance}
\label{subsec: instance}

Let $k$ be a sufficiently large integer.
Let $n>k$ be an integer such that $k=\ceil{\frac{n\log \log n}{\log n}}$.
Let $V$ be a set of $n$ vertices.
Let $\Sigma$ be the set of all permutations on $V$. For a permutation $\sigma\in \Sigma$, we define its corresponding edge set $E_{\sigma}=\set{(v,\sigma(v))\mid v\in V}$.

We now define the hard instance $(G,T,D)$. 
Graph $G$ is constructed in two steps. In the first step, we construct an auxiliary graph $G'$.
Its vertex set is $V$. Its edge set is obtained as follows. We sample three permutations $\sigma_1,\sigma_2,\sigma_3$ uniformly at random from $\Sigma$, and then let $E(G')=E_{\sigma_1}\cup E_{\sigma_2}\cup E_{\sigma_3}$.
In the second step, we remove all short cycles in $G'$ to obtain $G$.
%, in a similar way as \cite{erdos1959graph}.
%
Specifically, we first compute a breath-first-search tree $\tau$ starting from an arbitrary vertex of $G'$.
We then iteratively modify $G'$ as follows.
While $G'$ contains a cycle $C$ of length at most $(\log n)/100$, we find an edge of $C\setminus \tau$ (note that such an edge must exist, as $\tau$ is a tree), and remove it from $G'$. We continue until $G'$ no longer contains cycles of length at most $(\log n)/100$. We denote by $G$ the resulting graph.
The terminal set $T$ is an arbitrary subset of $V$ with size $k$. For each edge $e\in E(G)$, its weight $c(e)$ is defined to be $1$, and its \emph{length} $\ell_e$ is also defined to be $1$. The metric $D$ on the set $T$ of terminals is simply defined to be the shortest-path distance (in $G$) metric on $T$ induced by edge length $\set{\ell_e}_{e\in E(G)}$.

%Given a random permutation on vertices $\Gamma$, we define the graph $G(\Gamma)$ as a collection of cycles such that we add an edge between $v$ and $\Gamma(v)$ for each $v$. We construct a $6$-regular random graph $G$ by randomly choosing $3$ permutations $\Gamma$, $\Gamma_1$ and $\Gamma_2$, and let $G=G(\Gamma) \cup G(\Gamma_1) \cup G(\Gamma_2)$.

We next show some basic properties of the graphs $G'$ and $G$. We start with the following observation.

\begin{observation}
$G'$ is a $6$-regular graph, so $|E(G)|\le |E(G')|\le 3n$.
\end{observation}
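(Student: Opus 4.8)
The plan is to count, for each of the three sampled permutations separately, its contribution to $E(G')$ and to the degree of each vertex, and then sum. Write $E(G')$ as the (multiset) union $E_{\sigma_1}\cup E_{\sigma_2}\cup E_{\sigma_3}$, where by construction $E_{\sigma_i}=\set{(v,\sigma_i(v))\mid v\in V}$ consists of exactly one edge per vertex $v\in V$. Hence $|E_{\sigma_i}|\le n$ for each $i\in\set{1,2,3}$, and therefore $|E(G')|\le |E_{\sigma_1}|+|E_{\sigma_2}|+|E_{\sigma_3}|\le 3n$.

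For the regularity claim, I would fix a vertex $v\in V$ and a permutation $\sigma_i$ and identify which edges of $E_{\sigma_i}$ are incident to $v$. An edge $(w,\sigma_i(w))$ of $E_{\sigma_i}$ has $v$ as an endpoint iff $w=v$ or $\sigma_i(w)=v$, i.e.\ iff $w=v$ or $w=\sigma_i^{-1}(v)$. Thus $v$ is incident to exactly the two edges $(v,\sigma_i(v))$ and $(\sigma_i^{-1}(v),v)$ of $E_{\sigma_i}$, which coincide precisely when $\sigma_i(v)=v$ (a self-loop, contributing $2$ to the degree by the usual convention); either way $v$ receives degree exactly $2$ from $E_{\sigma_i}$. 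Summing over $i\in\set{1,2,3}$ gives $\deg_{G'}(v)=6$, so $G'$ is $6$-regular and in fact $|E(G')|=3n$.

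Finally, the bound $|E(G)|\le|E(G')|$ is immediate from the construction of $G$: in the second step we only ever delete edges from $G'$ (those lying on a short cycle but not on the BFS tree $\tau$), never add any, so $E(G)\subseteq E(G')$. Combining with the previous paragraph yields $|E(G)|\le|E(G')|\le 3n$.

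The only point requiring care is the bookkeeping convention: $E(G')$ should be treated as a multiset so that parallel edges (and the rare self-loop, occurring when $\sigma_i(v)=v$) are counted with multiplicity, which is what makes each $E_{\sigma_i}$ contribute degree exactly $2$ at every vertex; with the standard convention that a self-loop adds $2$ to a vertex's degree, the $6$-regularity then holds with no exceptions. Such parallel edges and self-loops are short cycles and are in any case removed when passing from $G'$ to $G$. There is no substantial obstacle here — the statement is a direct consequence of the definitions.
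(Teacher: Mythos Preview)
Your proposal is correct; the paper itself offers no proof for this observation, treating it as immediate from the construction. Your argument is exactly the natural elaboration a reader would supply (each $\sigma_i$ contributes degree $2$ at every vertex via the pair $(v,\sigma_i(v))$ and $(\sigma_i^{-1}(v),v)$, and $G$ is obtained from $G'$ by deletions only), and your remark about the multiset convention for self-loops and parallel edges is more careful than the paper itself.
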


\begin{observation}
\label{obs: girth}
$\gir(G)\ge (\log n)/100$.
\end{observation}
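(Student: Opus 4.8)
The statement $\gir(G) \ge (\log n)/100$ is essentially immediate from the construction, so the plan is just to make the short-cycle-removal loop's termination guarantee explicit. First I would observe that the loop in the second step of the construction runs precisely until $G'$ contains no cycle of length at most $(\log n)/100$, and only ever \emph{deletes} edges; hence the output graph $G$ has no cycle of length at most $(\log n)/100$, which by definition of girth (the minimum weight of a cycle, and here all edges have length $1$) means exactly $\gir(G) \ge (\log n)/100$ once we note that an integer-weight cycle of weight $\le (\log n)/100$ is a cycle with $\le \lfloor (\log n)/100 \rfloor$ edges.

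The only point that needs a word of care is that the loop actually \emph{terminates} — i.e., that we do reach a graph with no short cycle, rather than looping forever. This is where I would invoke the breadth-first-search tree $\tau$: at every iteration the cycle $C$ we find has length at most $(\log n)/100$ and must contain at least one edge not in $\tau$ (a tree is acyclic, so $C \not\subseteq \tau$), and we delete such an edge. Since $\tau$ is never touched, every deletion strictly decreases the number of edges outside $\tau$, which is a nonnegative integer bounded initially by $|E(G')| - (n-1) \le 3n$. Therefore the process halts after at most $3n$ iterations, and the termination condition of the loop gives the girth bound.

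I do not anticipate any real obstacle here — this is a bookkeeping observation rather than a theorem. The one subtlety worth flagging is the degenerate case where $G'$ happens to be disconnected: then "breadth-first-search tree" should be read as a breadth-first-search \emph{forest}, one tree per connected component, which is still acyclic and still contains no cycle, so the argument that $C$ has an edge outside $\tau$ is unaffected. (One should also note in passing that the removal of short cycles only decreases conductance and regularity in a controlled way, but that belongs to a later lemma, not to this observation.)
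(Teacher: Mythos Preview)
Your proposal is correct and matches the paper's approach: in fact the paper gives no proof at all for this observation, treating it as immediate from the termination condition of the short-cycle-removal loop. Your write-up simply makes explicit the termination argument (via the finite supply of non-tree edges) and the disconnected-case subtlety, both of which the paper omits.
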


\begin{observation}
The probability that $|E(G')\setminus E(G)|\ge n^{0.3}$ is at most $O(n^{-0.2})$.
\end{observation}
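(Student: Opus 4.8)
The plan is to show that the short-cycle removal step deletes only a few edges, with high probability. I would proceed by bounding the expected number of short cycles in the random graph $G'$, since each edge removed in the second step destroys at least one short cycle of $G'$, and distinct removals destroy distinct cycles (actually, it suffices that the total number of deleted edges is at most the number of short cycles present in $G'$, because each iteration removes one edge from one short cycle, and that cycle is no longer present afterward; more carefully, the edges removed form a subset of edges each lying on some cycle of length $\le (\log n)/100$ in $G'$). So the key quantity to control is $N := $ the number of cycles of length at most $L := (\log n)/100$ in $G'$.

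First I would estimate $\ex{N}$. A cycle of length $\ell$ in $G'$ uses $\ell$ vertices in cyclic order and $\ell$ edges, each edge coming from one of the three perfect matchings $E_{\sigma_1}, E_{\sigma_2}, E_{\sigma_3}$ (or their "reverses", since $(v,\sigma_i(v))$ is undirected). For a fixed cyclic sequence of $\ell$ distinct vertices and a fixed assignment of each of the $\ell$ edges to one of the three permutations, the probability that all these edges are simultaneously present is roughly $\prod_i (1/(n-1))^{(\text{number of edges assigned to }\sigma_i)}$ up to lower-order corrections — more precisely, for a single uniformly random permutation $\sigma$ on $n$ elements, the probability that a specified set of $m$ disjoint ordered pairs all satisfy $\sigma(v)=w$ is $(n-m)!/n! \le (n-m)^{-m}$-ish, which is $O(n^{-m})$ as long as $m = O(\log n) = o(n)$. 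Multiplying over the three permutations gives $O(n^{-\ell})$ for the probability that a fixed potential $\ell$-cycle is present. The number of potential $\ell$-cycles (choice of $\ell$ ordered distinct vertices, divided by $2\ell$ symmetries, times $3^\ell$ matching-assignments) is at most $n^\ell \cdot 3^\ell$. Hence the expected number of $\ell$-cycles is $O(3^\ell)$, and summing over $3 \le \ell \le L = (\log n)/100$ gives $\ex{N} = O(3^{L}) = O(3^{(\log n)/100}) = O(n^{(\log 3)/100}) = O(n^{0.016})$, comfortably $o(n^{0.2})$.

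Then I would finish with Markov's inequality: $\pr{|E(G')\setminus E(G)| \ge n^{0.3}} \le \pr{N \ge n^{0.3}} \le \ex{N}/n^{0.3} = O(n^{0.016}/n^{0.3}) = O(n^{-0.28})$, which is certainly $O(n^{-0.2})$ as claimed. (The constants $1/100$ and $0.3$, $0.2$ leave ample slack, so I would not optimize them.) One subtlety to handle carefully is the claim that the number of deleted edges is at most $N$: in the iterative process, when we find a short cycle $C$ and delete an edge $e \in C \setminus \tau$, the cycle $C$ was a cycle in the then-current graph, which is a subgraph of $G'$, hence $C$ is also a cycle of $G'$ of length $\le L$; and after deleting $e$, that particular cycle $C$ is gone forever, so we can charge the deletion of $e$ to the cycle $C \subseteq G'$, with distinct deletions charged to distinct cycles of $G'$. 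This gives (number of deleted edges) $\le N$, completing the argument.

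The main obstacle — though it is more bookkeeping than genuine difficulty — is getting the per-cycle probability bound right when edges of the cycle come from the same permutation $\sigma_i$: one must check that conditioning on some pairs being matched by $\sigma_i$ only decreases (up to $1+o(1)$ factors) the conditional probability of further pairs being matched, and that the disjointness of the vertices along a simple cycle makes all the relevant events "compatible" (no forced contradictions among the constraints $\sigma_i(v)=w$). Since cycle lengths are $O(\log n)$ while $n \to \infty$, all these $(1+o(1))$ corrections are harmless. I would also note at the outset that we may condition on $G'$ being connected (or just work with the whole vertex set $V$) — this does not affect the counting argument, as the BFS tree $\tau$ and the deletion process are only used to guarantee that $G$ stays connected, which is not needed for this particular observation.
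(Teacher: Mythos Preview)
Your approach is essentially identical to the paper's: bound the expected number of short cycles in $G'$ by a union bound over labeled vertex sequences and edge-to-permutation assignments, observe that the number of deleted edges is at most this count (via exactly the charging argument you give), and finish with Markov's inequality. One minor slip: the per-edge assignment factor should be $6$ (three permutations times two orientations $\sigma_i(v_j)=v_{j+1}$ or $\sigma_i(v_{j+1})=v_j$), not $3$, so $\ex{N}=O(6^{(\log n)/100})=O(n^{0.026})$ rather than $O(n^{0.016})$ --- but as you note, the slack is ample and the conclusion is unaffected.
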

\begin{proof}
%Recall that $\sigma_1,\sigma_2,\sigma_3$ are random permutations on $V$. We alternatively think of a random permutation being obtained as follows. We start from an arbitrary vertex of
    Let $v_1,\ldots,v_L$ be a sequence of $L\le (\log n)/100$ distinct vertices of $V$. We now show that the probability that the cycle $(v_1,\ldots,v_L,v_1)$ exists in $E(G')$ is at most $\big(6/(n-L)\big)^{L}$. Indeed, to realize the cycle edge $(v_i,v_{i+1}$, for some $\ell \in \{1,2,3\}$, $\sigma_{\ell}(v_i)=v_{i+1}$ or $\sigma_{\ell}(v_{i+1}=v_i$, where for convenience we say $v_{L+1} = v_1$. There are $6$ possible events. In order to form the cycle, we need to form $L$ edges, and each edge has $6$ possible events, which means there are at most $6^L$ ways to form the cycle in total. Consider any possible way, we have $\ell_1 \dots, \ell_{L} \in \{1,2,3\}$ and $j_1, \dots, j_L \in \{0,1\}$ such that for any index $1 \le i \le L$, we have $\sigma_{\ell_i}(v_{i+j_i})=v_{i+1-j_i}$. Let $\mathcal{E}_i$ denote this event, we have $\Pr[\mathcal{E}_i | \mathcal{E}_1 , \dots, \mathcal{E}_{i-1}] \le 1/(n-i)$. Thus the probability that all events $\mathcal{E}_i$ happen is at most $1/(n-L)^L$. Applying union bound on all the ways to form the cycle, the probability that the cycle exists in $E(G')$ is at most $\big(6/(n-L)\big)^L$.

    %\znote{todo}

Therefore, the expected number of cycles in $G'$ with length at most $(\log n)/100$ is at most 
\[\sum_{3\le L\le (\log n)/100} \frac{n(n-1)\cdots (n-L+1)}{2L\cdot \big(\frac{n-L}{6}\big)^{L}}\le \sum_{3\le L\le (\log n)/100} \frac{6^L}{2L}\cdot \bigg(1+\frac{L}{n-L}\bigg)^{L}\le \sum_{3\le L\le (\log n)/100} 6^L\le n^{0.1}. \]
Note that we delete at most one edge per each cycle, so $|E(G')\setminus E(G)|$ is less than the number of cycles in $G'$ with length at most $(\log n)/100$. Therefore, from Markov Bound, the observation follows.
\end{proof}

%It is well known that with high probability, $G$ is an expander.

We use the following previous results on the conductance and the Hamiltonicity of $G'$.

\begin{lemma} [\cite{puder2015expansion}] \label{lem:expansion}
    With probability $1-o(1)$, $\Phi(G') = \Omega(1)$.
\end{lemma}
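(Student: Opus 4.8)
The plan is to give the standard first-moment (union-bound-over-vertex-subsets) proof for random regular graphs; since $G'$ is the union of three independent uniformly random permutations, this is exactly the ``permutation model'' for random $6$-regular graphs, and the argument is routine. First I would reduce the conductance bound to an edge-boundary bound: as $G'$ is $6$-regular, for any $S\subseteq V$ with $1\le |S|=s\le n/2$ we have $\sum_{v\in S}\deg_{G'}(v)=6s\le 6(n-s)=\sum_{v\notin S}\deg_{G'}(v)$, so $\Phi(G')\ge\phi$ holds if and only if $|E_{G'}(S,V\setminus S)|\ge 6\phi s$ for every such $S$. It therefore suffices to fix a sufficiently small absolute constant $\phi>0$ and show that with probability $1-o(1)$ no set $S$ of size at most $n/2$ violates this inequality.

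For a fixed $S$ with $|S|=s$, I would analyze the crossing edges contributed by each permutation. Writing $W_i=\bigl|\{v\in S:\sigma_i(v)\notin S\}\bigr|$ for $i\in\{1,2,3\}$, bijectivity of $\sigma_i$ gives $\bigl|\{v\notin S:\sigma_i(v)\in S\}\bigr|=W_i$ as well, so (treating $E_{\sigma_i}$ as the multiset $\{(v,\sigma_i(v)):v\in V\}$ of $n$ edges, consistent with ``$6$-regular'') the number of edges of $E_{\sigma_i}$ crossing the cut is exactly $2W_i$, hence $|E_{G'}(S,V\setminus S)|=2(W_1+W_2+W_3)$. The $W_i$ are independent and $\mathbb{E}[W_i]=s(n-s)/n\ge s/2$. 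If $|E_{G'}(S,V\setminus S)|<6\phi s$ then $W_i<3\phi s$ for every $i$, so by independence $\Pr[|E_{G'}(S,V\setminus S)|<6\phi s]\le\prod_{i}\Pr[W_i<3\phi s]$. To bound $\Pr[W_i\le w]$: if $W_i\le w$ then $\sigma_i(A)\subseteq S$ for some $A\subseteq S$ with $|A|=s-w$, and for a fixed such $A$ one has $\Pr[\sigma_i(A)\subseteq S]=\prod_{j=0}^{s-w-1}\frac{s-j}{n-j}\le(s/n)^{s-w}$; a union bound over the $\binom{s}{w}$ choices of $A$ gives $\Pr[W_i\le w]\le\binom{s}{w}(s/n)^{s-w}$. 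Taking $w=3\phi s$ and $\binom{s}{\alpha s}\le 2^{H(\alpha)s}$ (binary entropy $H$) yields $\Pr[W_i<3\phi s]\le 2^{H(3\phi)s}(s/n)^{(1-3\phi)s}$.

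Finally I would union-bound over all $S$ of size $s$ and all $1\le s\le n/2$, using $\binom{n}{s}\le(en/s)^s$:
\[
\Pr[\Phi(G')<\phi]\ \le\ \sum_{s=1}^{\lfloor n/2\rfloor}\Bigl(\tfrac{en}{s}\Bigr)^{s}2^{3H(3\phi)s}\Bigl(\tfrac{s}{n}\Bigr)^{3(1-3\phi)s}
=\sum_{s=1}^{\lfloor n/2\rfloor}\Bigl(\kappa\,(s/n)^{\,2-9\phi}\Bigr)^{s},\qquad \kappa:=e\cdot 2^{3H(3\phi)}.
\]
For $\phi$ a small enough absolute constant one checks $\kappa<2^{\,2-9\phi}$ (it holds as $\phi\to0$, where $\kappa\to e<4$), so each summand is at most $\rho^{s}$ with $\rho:=\kappa\cdot 2^{-(2-9\phi)}<1$; moreover for $s\le\log n$ the factor $(s/n)^{2-9\phi}$ already makes the $s$-th summand at most $n^{-1}$. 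Hence the whole sum is $o(1)$ (indeed polynomially small in $n$), giving $\Phi(G')=\Omega(1)$ with probability $1-o(1)$.

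The step I expect to be the main obstacle is making this single union bound close \emph{uniformly} over all subset sizes: the linear regime $s=\Theta(n)$, where $\binom{n}{s}=2^{\Theta(n)}$, is what pins $\phi$ down to a sufficiently small absolute constant, and the constants there must be chased carefully (alternatively one can split into a sublinear regime, where $(s/n)^{\Theta(s)}$ alone dominates $\binom{n}{s}$, and a linear regime, where a Chernoff or bounded-differences concentration of $W_i$ around its mean $\ge s/2$ suffices). A minor bookkeeping point is that a uniform random permutation has $O(1)$ fixed points and $2$-cycles in expectation, so $G'$ is $6$-regular only up to $O(1)$ self-loops and parallel edges; these affect neither the edge boundaries nor the $\Omega(1)$ conductance. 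We note for context that much more is known---Friedman's theorem and Puder's refinements give the near-optimal second eigenvalue $\lambda_2(G')\le 2\sqrt{5}+o(1)$---but the elementary first-moment argument above already yields the $\Omega(1)$ conductance asserted by the lemma.
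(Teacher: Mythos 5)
Your proof is correct in substance, but it takes a genuinely different route from the paper: the paper does not prove this lemma at all — it invokes it as a known property of the permutation model of random regular graphs, citing \cite{puder2015expansion}, where the expansion follows from a much stronger near-Ramanujan bound on the second eigenvalue (for a $6$-regular graph, $\lambda_2\le 2\sqrt{5}+o(1)$) together with Cheeger's inequality. What you give instead is the classical self-contained first-moment argument: reduce conductance to an edge-boundary condition via $6$-regularity, bound $\Pr[W_i\le w]\le\binom{s}{w}(s/n)^{s-w}$ by a union bound over the preimage set $A$, exploit independence of the three permutations, and close with a union bound over all $\binom{n}{s}$ sets and all $s\le n/2$; your exponent bookkeeping ($3(1-3\phi)-1=2-9\phi$, $\kappa=e\cdot 2^{3H(3\phi)}<2^{2-9\phi}$ for small $\phi$) is right, and the two-regime closing (summands at most $n^{-1}$ for $s\le\log n$, geometric decay $\rho^{s}$ beyond, whose tail is $O(\rho^{\log n})$) does give an $o(1)$, in fact polynomially small, failure probability. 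What your approach buys is elementarity, explicit constants, and independence from spectral machinery; what the citation buys is brevity and a quantitatively much stronger statement. One point you flag but dismiss a bit too quickly: the identity $|E_{G'}(S,V\setminus S)|=2(W_1+W_2+W_3)$ and the exact $6$-regularity hold in the multigraph convention, whereas the paper defines $E(G')$ as a union of edge sets; parallel edges (unlike self-loops) do reduce the simple-graph boundary. Since the expected number of such coincidences is $O(1)$, this only threatens sets of size $O(1/\phi)$ times the defect count, and it is the same idealization the paper itself makes when asserting $6$-regularity, but if you wanted a fully rigorous statement you should either prove the lemma for the multigraph (which suffices for every use of it in the paper) or add a short argument that the $O(1)$ defects cannot destroy the expansion of the few tiny sets they touch.
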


\begin{corollary}
\label{cor: diam}
With probability $1-o(1)$, the diameter of $G$ is at most $O(\log n)$.
\end{corollary}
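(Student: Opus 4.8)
The plan is to transfer a diameter bound from $G'$ to $G$ by exploiting the fact that the breadth-first-search tree $\tau$ survives intact through the construction of $G$. First I would condition on the event of \Cref{lem:expansion} that $\Phi(G')=\Omega(1)$; this occurs with probability $1-o(1)$, and in particular $G'$ is connected. Since $\tau$ is a BFS tree of $G'$ rooted at some vertex $s$, the tree-path from $s$ to any vertex $v$ is a shortest path in $G'$, so $\dist_\tau(s,v)=\dist_{G'}(s,v)\le \diam(G')$, and by the triangle inequality in $\tau$ we get $\diam(\tau)\le 2\diam(G')$. The crucial observation is that the short-cycle-removal step only ever deletes edges of $E(G')\setminus E(\tau)$, so $\tau$ is a subgraph of $G$, and hence $\diam(G)\le \diam(\tau)\le 2\diam(G')$. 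It therefore suffices to show $\diam(G')=O(\log n)$.

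For the latter I would run the standard ball-growing argument. Fix a constant $\phi>0$ with $\Phi(G')\ge\phi$, and for a vertex $v$ write $B_r=\set{u\in V: \dist_{G'}(v,u)\le r}$. Whenever $|B_r|\le n/2$, the definition of conductance together with $6$-regularity of $G'$ gives $|E_{G'}(B_r,V\setminus B_r)|\ge 6\phi\cdot|B_r|$; every edge counted on the left has an endpoint in $B_{r+1}\setminus B_r$, and $6$-regularity bounds the number of such edges by $6\cdot|B_{r+1}\setminus B_r|$, so $|B_{r+1}|\ge (1+\phi)\cdot|B_r|$. Iterating, $|B_r|>n/2$ as soon as $r\ge \log_{1+\phi}(n/2)=O(\log n)$. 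Applying this with radius $r^*=O(\log n)$ to two arbitrary vertices $u,v$, the balls of radius $r^*$ around $u$ and around $v$ each contain more than $n/2$ vertices and so must intersect, which yields $\dist_{G'}(u,v)\le 2r^*=O(\log n)$. Hence $\diam(G')=O(\log n)$, and the corollary follows.

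There is no serious obstacle here; the only point to get right is the observation in the first paragraph that preserving the BFS tree $\tau$ lets us sidestep any argument about the conductance or even the connectivity of $G$ itself (which could in principle degrade after edge deletions). In particular, this statement does not require the bound $|E(G')\setminus E(G)|\le n^{0.3}$ at all — only the structural fact that tree edges are never removed — and the union bound is trivial since we condition on the single $1-o(1)$ event from \Cref{lem:expansion}.
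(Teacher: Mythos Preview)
Your proposal is correct and follows essentially the same approach as the paper: both argue that $\tau\subseteq G$ implies $\diam(G)\le 2\diam(G')$, and then bound $\diam(G')$ by a conductance-driven ball-growing argument. The only cosmetic difference is that the paper tracks the volume $\alpha_t=\sum_{v\in B_t}\deg(v)$ rather than the vertex count $|B_r|$, which are interchangeable here since $G'$ is $6$-regular.
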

\begin{proof}
From the construction of $G$, $G$ contains a BFS tree of $G'$, so the diameter of $G$ is at most twice the diameter of $G'$. Therefore, it suffices to show that, if $\Phi(G') \ge \Omega(1)$, then the diameter of graph $G'$ is at most $O(\log n)$, which we do next.

Let $v$ be an arbitrary vertex of $G'$. For each integer $t$, we define the set $B_t=\set{v'\mid \dist(v,v')\le t}$, and $\alpha_t=\sum_{v': \dist_{G'}(v,v')\le t}\deg(v')$, namely  the sum of degrees of all vertices in $B_t$. 

Denote $t^*=\max\set{t\mid \alpha_t\le |E(G')|}$.
Note that, for each $1\le t\le t^*$, as $\Phi(G') \ge \Omega(1)$,
$|E(B_t,V\setminus B_t)|\ge \Omega(\alpha_t)$. Therefore, \[\alpha_{t+1}\ge \alpha_t+\sum_{v'\in B_{t+1}\setminus B_{t}}\deg(v)\ge \alpha_t+ |E(B_t,V\setminus B_t)|\ge \alpha_t\cdot (1+\Omega(1)).\]
It follows that $t^*\le O(\log n)$. Therefore, for any pair $v,v'\in V$, the set of vertices that are at distance at most $t^*+1$ from $v$ must intersect the set of vertices that are at distance at most $t^*+1$ from $v'$, as otherwise the sum of degrees in all vertices in these two sets is greater than $2|E(G')|$, a contradiction. Consequently, the diameter of $G'$ is at most $2t^*+2\le O(\log n)$.
\end{proof}

%If we only look the graph defined by $\Gamma_1$ and $\Gamma_2$, the graph contains a Hamilton cycle.

\begin{lemma} [\cite{frieze2001hamilton}] \label{lem:cycle}
With probability $1-o(1)$, the subgraph of $G'$ induced by edges of $E_{\sigma_1}\cup E_{\sigma_2}$ is Hamiltonian. 
\end{lemma}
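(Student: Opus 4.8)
The graph named in the lemma has vertex set $V$ and edge set $E_{\sigma_1}\cup E_{\sigma_2}$; since each $E_{\sigma_i}$ is the $1$-factor (a disjoint union of cycles, with a loop at each fixed point) of an independent uniformly random permutation of $V$, this graph --- call it $H$ --- is exactly the union of two independent random permutations of an $n$-element set. Hamiltonicity of this model is the content of \cite{frieze2001hamilton}, so the cleanest proof is to quote that theorem directly. If one instead wanted a self-contained argument, the plan would be the by-now standard two-step route: (a) show that $H$ is, with probability $1-o(1)$, sufficiently pseudorandom (a linear-size vertex expander that is connected), and (b) feed this into Pósa's rotation--extension technique.

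\emph{Step (a): expansion of $H$.} The statement to prove here is that, whp, $H$ is connected and there is an absolute constant $\alpha\in(0,1)$ such that every $S\subseteq V$ with $|S|\le\alpha n$ satisfies $|N_H(S)\setminus S|\ge (2-o(1))|S|$, together with the companion pseudorandomness used in rotation arguments (that once a rotation-endpoint set has grown to linear size it can be linked up). Each of these is a first-moment estimate over the randomness of $\sigma_1,\sigma_2$: for a fixed candidate ``bad'' set $S$ of size $s$ and a fixed set $U\supseteq S$ of size not much larger than $s$, the probability that all of $\sigma_1(S),\sigma_1^{-1}(S),\sigma_2(S),\sigma_2^{-1}(S)$ lie inside $U$ decays geometrically in $s$ --- when one exposes the two permutations value by value and charges both the images and the preimages of $S$ --- so the union bound over all $S$ and $U$ is $o(1)$ for a small enough constant $\alpha$; connectivity is obtained by the same tool with far more room to spare (the probability that some $T$ with $\alpha n\le|T|\le n/2$ is closed under both $\sigma_1$ and $\sigma_2$ is at most $\sum_t \binom{n}{t}\binom{n}{t}^{-2}=o(1)$). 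Note that the conductance bound $\Phi(G')=\Omega(1)$ from \Cref{lem:expansion} is of no direct use here, because $H$ is only a subgraph of $G'$ and a subgraph of an expander need not even be connected; the expansion of the union of two permutations must be established on its own. Finally, whp $\sigma_1,\sigma_2$ together create only $o(n)$ loops and parallel edges (in expectation $O(1)$), which can be deleted without affecting any of the above, so we may assume $H$ is simple.

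\emph{Step (b), and the main obstacle.} Given Step (a), $H$ satisfies the hypotheses of the classical expansion-based sufficient condition for Hamiltonicity: one takes a longest path, performs Pósa rotations to produce $\Omega(n)$ alternative endpoints, and uses the (near) factor-$2$ expansion of that endpoint set together with connectivity and the linking property to force either an extension of the path or a cycle on its vertex set that can be rerouted into a strictly longer path, contradicting maximality; hence $H$ has a Hamilton cycle. This last step is a black-box invocation of a well-known lemma. The entire difficulty is concentrated in Step (a): a \emph{single} random permutation is merely a disjoint union of $\Theta(\log n)$ cycles and is therefore very far from connected, let alone an expander, so one genuinely needs the second permutation, and the delicate point is to establish the expansion of the union while controlling \emph{all} small vertex subsets simultaneously --- the union bound there is essentially tight in the exponent, and the bookkeeping (conditioning on already-exposed values, accounting for neighbours arising from both $\sigma_i$ and $\sigma_i^{-1}$, and getting the right size regime for $U$) is where care is required. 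For the purposes of this paper it is simplest to cite \cite{frieze2001hamilton} and move on.
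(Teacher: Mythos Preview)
Your proposal is correct and matches the paper: \Cref{lem:cycle} is stated as a cited result from \cite{frieze2001hamilton} with no proof given, and you correctly identify that the cleanest route is simply to quote that theorem. Your optional self-contained sketch via expansion plus P\'osa rotation--extension is sound in outline, but the paper does not attempt it and neither need you.
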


%Let $T$ be an arbitrary set of terminals with size $k=n/\sqrt{\log n}$. We prove that even if we allow adding $k$ more centers, the $0$-extension of $G$ still cost $\Omega(n \log \log n)$. 

Now if we consider the semi-metric LP relaxation (\textsf{LP-Metric}) of this instance $(G,T,D)$, then clearly the graph itself gives a solution $\delta$ to (\textsf{LP-Metric}). Specifically, $\delta(u,u')=\dist_{\ell}(u,u')$, where $\dist_{\ell}(\cdot,\cdot)$ the shortest-path (in $G$) distance metric on $V$ induced by the lengths $\{\ell_e\}_{e\in E(G)}$. Such a solution has cost $|E(G)|=O(n)$ (as all edges have weight $c(e)=1$). Therefore, in order to prove \Cref{thm: 0-ext with Steiner lower}, it suffices to show that any solution $(\fset,\delta)$ with size $O(k\log^{1-\eps}k)$ has cost at least $\Omega(\eps n\log\log n)=\Omega(\eps n\log\log k)$.

Observe that, the graph $G$ constructed above is essentially a bounded-degree high-girth expander, which is similar to the hard instance used in \cite{leighton2010extensions} for proving the $\Omega(\log\log k)$ quality lower bound for flow vertex sparsifier (without Steiner nodes). However, our proof in the following subsections takes a completely different approach from the approach in \cite{leighton2010extensions}.

\subsection{Proof of \Cref{thm: 0-ext with Steiner lower} for a Special Class of Solutions}
\label{subsec: canonical}

In this subsection, we prove the cost lower bound for a special type of solutions to the $\zesn(f)$ instance $(G,T,D)$. Specifically, we only consider the solutions $(\fset, \delta)$ such that 
\begin{itemize}
\item each cluster $F\in \fset$ corresponds to a distinct vertex of $V$, and for each terminal $t\in T$, the unique cluster $F\in \fset$ that contains $t$ corresponds to $t$; and
\item for each pair $F,F'$ of clusters in $\fset$, if we denote by $v$ ($v'$, resp.) the vertex that cluster $F$ ($F'$, resp.) corresponds to, then $\delta(F,F')=\dist_{G}(v,v')$.
\end{itemize}
We call such solutions \emph{canonical solutions}.
In this subsection, we show that, with high probability, any canonical solution of size $o(n/\log^{\eps} n)$ has cost $\Omega(\eps n\log \log n)$. 
%As all edges in $G$ have unit length and $|E(G)|=O(n)$, such a solution incurs a stretch of $\Omega(\eps \log \log n)=\Omega(\eps \log\log k)$.

Consider now a canonical solution $(\fset,\delta)$ to the instance. For each cluster $F\in \fset$, we denote by $v(F)$ the vertex in $V$ that it corresponds to, and we call $v(F)$ the \emph{center} of $F$ (note however that $v(F)$ does not necessarily lie in $F$). We say that $F\in \fset$ is \emph{large} iff $|F|\ge n^{0.1}$, otherwise we say it is \emph{small}. 
We distinguish between the following cases, depending on the total size of large clusters.

\subsubsection*{Case 1: The total size of large clusters is at most $0.1n$}

Recall that $\vol(\fset,\delta)=\sum_{(u,u')\in E(G)}\delta(F(u),F(u'))$, where $F(u)$ ($F(u')$, resp.) is the unique cluster in $\fset$ that contains $u$ ($u'$, resp.). We call $\delta(F(u),F(u'))$ the \emph{contribution} of edge $(u,u')$ to the cost $\vol(\fset,\delta)$.
As the solution $(\fset,\delta)$ is canonical, \[\delta(F(u),F(u'))=\dist_G(v(F(u)),v(F(u'))\ge \dist_{G'}(v(F(u)),v(F(u')),\] 
as $G$ is obtained from $G'$ by only deleting edges.
%We also say that the center $v(F_u)$ \emph{represents} $u$. So the contribution of an edge is at least the distance (in $G'$) between the two centers that represent its endpoints.
%We say that an edge in $G$ is \emph{stretched}, iff its contribution is at least $\log \log n/30$.
We say that a pair $F,F'$ of clusters are \emph{friends} (denoted as $F\sim F'$), iff $\dist_{G'}(v(F),v(F'))\le \eps \log \log n/30$.
We say that an edge $(u,u')$ is \emph{unfriendly}, iff the pair of clusters that contain $u$ and $u'$ are not friends.
Therefore, in order to show $\vol(\fset,\delta)=\Omega(\eps n\log\log n)$, it suffices to show that there are $\Omega(n)$ unfriendly edges in $G'$. This is since graph $G$ is obtained from $G'$ by deleting at most $n^{0.3}$ edges, so there are $\Omega(n)$ edges contributing at least $\eps \log\log n$ each to $\vol(\fset,\delta)$.
Note that, as $G$ is a $6$-regular graph, each cluster is a friend to at most $6^{\eps \log \log n/30}<\log^{\eps/10} n$ clusters in $\fset$.
%; and if a pair $F,F'$ of clusters are not friends, then every edge in $E_G(F,F')$ is a stretched edge.

The following lemma completes the proof in this case.

%In this section, we consider the case when the total size of large cluster is at most $0.1n$. Given the clusters, we say an edge in $G$ is long if the centers of the clusters its two endpoint in has distance at least $\frac{\log \log n}{30}$. In other words, if there are $\Omega(n)$ long edges, then the total cost is $\Omega(n\log \log n)$. We say two clusters are close if distance of their centers is at most $\frac{\log \log n}{30}$. In other words, if two clusters are not close, then any edge between them is a long edge. Since each vertex in garph $G$ has degree $6$, each cluster has at most $6^{\frac{\log \log n}{30}} < \log^{-10} n$ close clusters.

%To prove a lower bound of the number of long edges, we consider the following slightly different problem: given graph $G$, we partition the vertices into at most $n / \sqrt{\log n}$ sets, and we can arbitrarily set some pair of sets are friend sets. The constraint is that each set can have at most $\log^{-10} n$ friendsets. We say an edge in $G$ is bad if two endpoint are in two different sets and they are not friend of each other. Note that any $0$-extension with at most $n / \sqrt{\log n}$ clusters correspond to one of the partition and a choice friend sets, the lower bound of the number of bad edges is also a lower bound of the number of long edges. 

%The following lemma gives the lower bound of the number of bad edges, and thus also give a lower bound of the number of long edge and the cost of $0$-extension.

\begin{lemma} \label{lem:small}
With probability $1-o(1)$, the random graph $G'$ satisfies that, for any partition $\fset$ of $V$ into $|\fset|\le O(n/\log^{\eps} n)$ clusters such that $\sum_{|F|\ge n^{0.1}}|F|\le 0.1n$, and for any friendship relation on $\fset$ in which each cluster $F$ is a friend to at most $\log^{0.1 \eps} n$ other clusters, $G'$ contains at least $n/10$ unfriendly edges, i.e., $\sum_{F\not\sim F'}|E_{G'}(F,F')|\ge n/10$.
\end{lemma}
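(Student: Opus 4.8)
The plan is to prove Lemma~\ref{lem:small} by a union bound over all ``bad configurations'' --- that is, over all pairs $(\fset, \sim)$ where $\fset$ is a partition into few clusters and $\sim$ is a sparse friendship relation --- and to show that the probability that any fixed bad configuration has fewer than $n/10$ unfriendly edges is doubly-exponentially small in $n$, small enough to beat the number of configurations. The key observation that makes this work is that, conditioned on $\sigma_1, \sigma_2$ (which we may do, since we only need $\sigma_3$ for this argument), the graph $G'$ contains the perfect matching $E_{\sigma_3}$ drawn uniformly at random and \emph{independently} of everything else; so it suffices to show that for any fixed configuration, with overwhelming probability a constant fraction of the matching edges $(v, \sigma_3(v))$ are unfriendly. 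First I would fix the configuration $(\fset, \sim)$ and define, for each vertex $v$, its \emph{friendly zone} $\Gamma(v) = \bigcup_{F' \sim F(v)} F'$, the union of $F(v)$ together with all clusters friendly to it; a matching edge $(v, w)$ is friendly iff $w \in \Gamma(v)$.

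Next I would bound $|\Gamma(v)|$. Each cluster $F$ is friendly to at most $\log^{0.1\eps} n$ others, and we need a bound on $|F'|$ for those friends. Here the hypothesis $\sum_{|F| \ge n^{0.1}} |F| \le 0.1n$ is used: the total mass of large clusters is small, so at most $0.1n$ vertices lie in large clusters. Call a vertex $v$ \emph{heavy} if $F(v)$ is large or some friend of $F(v)$ is large; the number of heavy vertices is at most $0.1n + (\text{number of clusters}) \cdot \log^{0.1\eps}n \cdot (\text{max large size})$ --- but that overcounts, so more carefully: a vertex is heavy if its friendly zone touches a large cluster, and since each large cluster is friendly to at most $\log^{0.1\eps}n$ clusters whose total size is at most $n$, and there are at most $O(n/\log^\eps n)$ large clusters... in any case one shows the heavy vertices number at most, say, $0.2n$. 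For a \emph{light} vertex $v$, every cluster in its friendly zone is small ($< n^{0.1}$), and there are at most $\log^{0.1\eps}n + 1$ of them, so $|\Gamma(v)| \le 2 n^{0.1}\log^{0.1\eps}n \le n^{0.2}$. Thus for a light $v$, $\Pr_{\sigma_3}[\sigma_3(v) \in \Gamma(v) \mid \sigma_3 \text{ restricted so far}] \le n^{0.2}/(n - n) \le 2n^{-0.8}$ roughly, so the expected number of friendly matching edges incident to light vertices is at most $O(n^{0.2})$, negligible. Since there are $\ge 0.8n$ light vertices, this gives in expectation $\ge 0.8n - o(n)$ unfriendly matching edges, hence by a concentration argument $\ge n/10$ of them with probability $1 - e^{-\Omega(n)}$ (the matching is a product-like structure; one can expose the edges one at a time and use a Chernoff-type / Azuma-type bound for the number of ``hits'', noting each conditional hit probability is $\le n^{-0.7}$). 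Actually since the per-step hit probability is so tiny, even a first-moment/Markov bound gives that with probability $1 - n^{-0.5}$ the number of friendly light-vertex matching edges is $o(n)$; but to beat the union bound over configurations we need the exponentially small failure probability, so the martingale concentration is the right tool.

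Then I would count the configurations. The number of partitions of an $n$-set into at most $m = O(n/\log^\eps n)$ parts is at most $m^n = n^{O(n/\log^{\eps - 1} n)} \cdot$ --- wait, that's $(n/\log^\eps n)^n \le n^n = 2^{O(n\log n)}$, which is \emph{too large} to beat $e^{-\Omega(n)}$. This is the main obstacle, and it is why the argument must not union-bound over all partitions of $V$. The fix: the unfriendly-edge count $\sum_{F \not\sim F'} |E_{G'}(F,F')|$ only depends on the partition \emph{through} the friendly zones, and more to the point, only through which matching edges of $E_{\sigma_3}$ are ``charged.'' The right move is to union-bound not over partitions but over the relevant combinatorial data of bounded size: specifically, over choices of the set $S$ of light vertices (but this is still $2^n$ subsets). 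So instead I would argue differently --- fix $\sigma_1, \sigma_2$, and observe that we want to show $G' = E_{\sigma_1} \cup E_{\sigma_2} \cup E_{\sigma_3}$ is such that \emph{every} valid configuration has many unfriendly edges. Rephrase: a configuration is bad for $G'$ iff there is a partition $\fset$ and sparse $\sim$ with fewer than $n/10$ unfriendly edges, equivalently more than $|E(G')| - n/10 \ge 3n - n/10$... no. Let me reconsider: the clean approach is to bound the number of configurations by bounding the number of distinct \emph{friendly-zone structures}, but since friendly zones can be arbitrary that fails too.

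The honest resolution, and what I expect the paper does: we do not need the full partition --- we only need, for the union bound, the list of ``center vertices'' $\{v(F) : F \in \fset\}$ together with the friendship graph on them, because \emph{the set of friendly matching edges is entirely determined by the friendship relation among centers via the distance condition in $G'$} --- but distances in $G'$ themselves depend on $\sigma_3$. Hmm. Actually the cleanest fix uses that friendship is defined by $\dist_{G'} \le \eps\log\log n/30$ \emph{regardless of the partition}: so the friendship relation is NOT a free choice --- wait, the lemma statement allows an arbitrary friendship relation in which each cluster has few friends, it is not tied to distance. Given that, the key insight must be: it suffices to union-bound over the choice of, for each vertex $v$, a set $\Gamma(v)$ of size $\le n^{0.2}$ (the friendly zone), but with the crucial constraint that these zones are \emph{not arbitrary}: each $\Gamma(v)$ is a union of $\le \log^{0.1\eps}n + 1$ clusters and clusters are disjoint, so the total structure is a partition refined by the zones. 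The number of ways to choose, for each vertex, which of $m$ clusters it belongs to AND which $\le \log^{0.1\eps} n$ clusters are friends of each cluster, is $m^n \cdot m^{m \log^{0.1\eps}n} = 2^{O(n\log n)}$ --- still too big. So concentration at level $e^{-\Omega(n\log n)}$ is needed, i.e. we need the failure probability for a fixed configuration to be $e^{-\Omega(n\log n)}$, \emph{not} merely $e^{-\Omega(n)}$. Fortunately this is exactly what we get: the number of friendly light-vertex matching edges is a sum over $\ge 0.8n$ exposures each with conditional hit probability $\le n^{-0.7}$, so the probability it exceeds $n/20$ is at most $\binom{n}{n/20}(n^{-0.7})^{n/20} \le (20e \cdot n^{-0.7})^{n/20} = e^{-\Omega(n\log n)}$, which indeed dominates $2^{O(n\log n)}$ for large $n$. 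That is the crux, and the rest is bookkeeping: combine to conclude that with probability $1 - o(1)$ over $\sigma_3$ (hence over $G'$), every admissible $(\fset, \sim)$ has at least $0.8n - n/20 - (\text{heavy correction}) \ge n/10$ unfriendly edges among $E_{\sigma_3} \subseteq E(G')$, which is the claim.
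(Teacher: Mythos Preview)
Your overall strategy---condition on $\sigma_1,\sigma_2$, then union-bound over configurations $(\fset,\sim)$ against the randomness of the matching $E_{\sigma_3}$---is the right shape, and your analysis of the per-configuration failure probability is essentially correct. The gap is in the final union bound: you assert that the failure probability $e^{-\Omega(n\log n)}$ ``indeed dominates'' the configuration count $2^{O(n\log n)}$, but you never compare the hidden constants, and they do not work out. The configuration count is $m^n\cdot m^{m\log^{0.1\eps}n}$ with $m=\Theta(n/\log^{\eps}n)$, which is $n^{(1-o(1))n}$. Your failure probability $\binom{n}{n/20}(n^{-0.7})^{n/20}$ is only $n^{-0.035n(1+o(1))}$; even if you tighten the threshold (the failure event actually forces roughly $0.7n$ of the $0.8n$ light-vertex matching edges to be friendly, giving $\approx n^{-0.49n}$), you still cannot beat $n^{(1-o(1))n}$. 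The union bound diverges.

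The paper closes this gap with one additional idea that you are missing: it invokes \Cref{lem:cycle} to get (w.h.p.) a Hamiltonian cycle $C\subseteq E_{\sigma_1}\cup E_{\sigma_2}$, and then restricts the union bound to configurations $(\fset,\sim)$ that have fewer than $n/10$ unfriendly edges \emph{along $C$} (any configuration bad for $G'$ is in particular bad for $C\subseteq G'$). Walking around $C$, at most $n/10$ consecutive pairs are unfriendly, so for at least $0.9n$ of the vertices the cluster is determined up to $\log^{0.1\eps}n+1$ choices once the previous vertex's cluster is fixed; this cuts the configuration count from $n^{(1-o(1))n}$ down to $n^{n/4}$. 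Against this the paper proves a per-configuration failure probability of $n^{-n/3}$ over $\sigma_3$ (via essentially your light/heavy argument, with ``bad'' clusters defined as having no friend of size $\ge n^{0.4}$), and now $n^{n/4}\cdot n^{-n/3}=n^{-n/12}=o(1)$. So the missing ingredient is precisely the use of the Hamiltonian cycle to shrink the search space; without it the union bound cannot be made to close.
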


\begin{proof}
Recall that $G'$ is obtained by sampling three random permutations $\sigma_1,\sigma_2,\sigma_3$ from $\Sigma$ and taking the union of their corresponding edge sets $E_{\sigma_1}, E_{\sigma_2}, E_{\sigma_3}$. We alternatively view $G'$ as constructed in two steps. In the first step, we obtain a graph $\hat G$ by sampling two random permutations $\sigma_1,\sigma_2$ from $\Sigma$ and letting $\hat G=(V,E_{\sigma_1}\cup E_{\sigma_2})$. In the second step, we sample a third permutation $\sigma_3$ from $\Sigma$ and let $G'=(V, E(\hat G)\cup E_{\sigma_3})$.
From \Cref{lem:cycle}, with high probability, $\hat G$ contains a Hamiltonian cycle on $V$.

    For convenience, we denote by $(\fset,\sim)$ a pair of clustering $\fset$ and the friendship relation on clusters of $\fset$. We say that the pair $(\fset,\sim)$ is \emph{valid}, iff $|\fset|\le O(n/\log^{\eps} n)$, $\sum_{|F|\ge n^{0.1}}|F|\le 0.1n$, and each cluster $F$ is a friend to at most $\log^{\eps/10} n$ other clusters.

\begin{claim}
\label{clm: number of partition}
For any Hamiltonian cycle $C$ on $V$, there are at most $n^{n/4}$ valid pairs $(\fset,\sim)$ satisfying that $\sum_{F\not\sim F'}|E_{C}(F,F')| < n/10$.
\end{claim}
\begin{proof}
%Next, we prove that fix $\Gamma_1$ and $\Gamma_2$, the number of partition and choices such that $G(\Gamma_1)\cup G(\Gamma_2)$ does not have $n/10$ bad edges is small.
Denote $L=c^*n/\log^{\eps} n$,
and let $\fset=\set{F_1,\ldots,F_L}$.

First, the number of possible friendship relations on $\fset$ such that each cluster of $\fset$ is a friend to at most $\log^{0.1\eps} n$ other clusters is at most 
%We first choose the pairs of friend sets, since there are at most $\frac{n}{\sqrt{\log n}}$ sets, and each set has at most $\log^{-10} n$ friend sets, there are at most $\binom{\frac{n}{\sqrt{\log n}}}{\log^{-10}n}$ choices for each set, and there are at most
$$
\binom{L}{\log^{0.1\eps} n}^L\le 
\binom{\frac{c^*n}{\log^{\eps} n}}{\log^{0.1\eps} n} ^{\frac{c^*n}{\log^{\eps} n}} \le \bigg(\frac{c^*n}{\log^{\eps} n}\bigg)^{\frac{c^*n}{\log^{\eps} n}\cdot \log^{0.1\eps} n}<  n^{c^*n \log^{-0.05\eps} n}.
$$
%choices in total. 
Assume now that we have a fixed friendship relation $\sim$ on the clusters in $\fset$. We now count the number of clusterings $\fset$ with $\sum_{F\not\sim F'}|E_{C}(F,F')| < n/10$.
%Next, by \Cref{lem:cycle}, $G(\Gamma_1)\cup G(\Gamma_2)$ contians a cycle, we name the vertices along the cycle as $v_1, v_2, \dots, v_n$. Before we choose the set each vertex belong to, we first choose the edges on the cycle that we allow to be a bad edge. Since there are at most $n/10$ bad edges, there are at most 
Denote $C=(v_1, v_2, \ldots, v_n,v_1)$. 
First, the number of possible unfriendly edge set (which is a subset of $E(C)$ of size at most $0.1n$) is at most
$$\sum_{i=0}^{n/10} \binom{n}{i} \le n \cdot \binom{n}{n/10} < n\cdot \bigg(\frac{en}{n/10}\bigg)^{n/10} < n^{n \log^{-0.5}n}.$$

We now count the number of clusterings $\fset$ that, together with the fixed friendship relation $\sim$, realizes a specific unfriendly edge set.
We will sequentially pick, for each $i$ from $1$ to $n$, a set among $\set{F_1,\ldots,F_L}$ to add the vertex $v_i$ to.
The first vertex $v_1$ has $L$ choices. Consider now some index $1 \le i \le n-1$ and assume that we have picked sets for $v_1,\ldots,v_{i}$. If $(v_i,v_{i+1})$ is an unfriendly edge, then vertex $v_{i+1}$ has $L$ choices; if $(v_i,v_{i+1})$ is not an unfriendly edge, this means that $v_{i+1}$ must go to some cluster that is a friend of the cluster we have picked for $v_{i}$ (or $v_{i+1}$ can go to the same cluster as $v_i$), so $v_{i+1}$ has at most $\log^{0.1\eps} n + 1$ choices.
As there are no more than $0.1n$ unfriendly edges, the number of possible clusterings $\fset$ is at most
%Finally, we choose the sets for the vertices one by one. We can choose the set for $v_1$ arbitrarily, so there are $n/\sqrt{\log n}$ choices. When we choose the set for $v_{i+1}$ for any $1 \le i \le n-1$, if we allow the edge $(v_i,v_{i+1})$ be bad edge, then there are $n/\sqrt{\log n}$ choices. However, if we do not allow the edge $(v_i,v_{i+1})$ be bad edge, then there are only $\log^{-10} n + 1$ choices. Thus, the total number of choice for the partition is at most
$$
n\cdot (\log^{0.1\eps} n)^n \cdot \bigg(\frac{n}{\log^{\eps} n}\bigg)^{0.1n} < n^{n/5}.
$$
Altogether, the number of valid pairs $(\fset,\sim)$ satisfying that $\sum_{F\not\sim F'}|E_{C}(F,F')|\ge n/10$ is at most
%Thus, the total possible partition and pairs of friend sets such that does not generate $n/10$ bad edges in $G(\Gamma_1)\cup G(\Gamma_2)$ is at most 
$$
n^{c^*n \log^{-0.05\eps} n} \cdot n^{n \log^{-0.5}n} \cdot n^{n/5} < n^{n/4}.
$$
\end{proof}

%We construct the graph $G$ in two phases, in the first phase, we sample $\Gamma_1$ and $\Gamma_2$. And in the second phase we sample $\Gamma$. 

\begin{claim}
\label{clm: number of permutation}
For every valid pair $(\fset,\sim)$, the probability that the edge set $E_{\sigma_3}$ of a random permutation $\sigma_3$ contains at most $n/10$ unfriendly edges is at most $n^{-n/3}$.
\end{claim}
%The proof has two parts. We first prove that if we fix the partition and the pairs of friend sets before the secondd phase, then the probability that there are less than $n/10$ bad edges in $G(\Gamma)$ is very small. Then we prove that given $\Gamma_1$ and $\Gamma_2$, the total number of partition and choice of friend sets such that there are no more than $n/10$ bad edges in $G(\Gamma_1) \cup G(\Gamma_2)$ is small. Finally, if we take union bound on all these partition and choice, the probabiltiy of at least one of them gives no more than $n/10$ bad edges is small.
\begin{proof}
%Fix a partition and friend pairs, we prove that the probability that $G(\Gamma)$ contains less than $n/10$ bad edges is very small. 
We say that a cluster $F\in \fset$ is \emph{bad} if it does not have a friend cluster of size at least $n^{0.4}$, otherwise we say it is \emph{good}. We first prove the following observation that most vertices lie in a bad cluster.

\begin{observation}
$\sum_{F\text{:bad}}|F|\ge 0.8n$.
\end{observation}
\begin{proof}
As the pair $(\fset,\sim)$ is valid, $\sum_{|F|\ge n^{0.1}}|F|\le 0.1n$, so $\fset$ contains at most $0.1\cdot n^{0.6}$ clusters with size at least $n^{0.4}$. 
As each cluster is a friend to at most $\log^{0.1\eps} n$ other clusters, $\fset$ contains at most $\big(0.1\cdot n^{0.6}\cdot\log^{0.1\eps}n\big)$ good sets. Therefore, the total size of all good clusters is at most $0.1n + n^{0.1} \cdot 0.1 \cdot n^{0.6}\cdot\log^{0.1\eps}n < 0.2n$. The observation now follows.
\end{proof}

%Therefore, for a random vertex $u$, than with probability at least $1-n^{-0.5}$, $(u,v)$ is a bad edge.

We alternatively construct the random permutation $\sigma_3$ as follows. We arrange the vertices in $V$ into a sequence $(v_1,\ldots,v_n)$, such that each of the first half $v_1,\ldots,v_{n/2}$ lies in some bad set. Now sequentially for each $1,2,\ldots,n$, we sample a vertex $u_i$ (without replacement) from $V$ and designate it as $\sigma_3(v_i)$. It is easy to observe that the permutation $\sigma_3$ constructed in this way is a random permutation from $\Sigma$.

The following observation completes the proof of \Cref{clm: number of permutation}.

\begin{observation}
The probability that the number of unfriendly edges in $\set{(v_i,\sigma_3(v_i))\mid 1\le i\le 9n/10}$ is less than $0.1n$ is at most $n^{-n/3}$.
\end{observation}
\begin{proof}
For any $v$ in a bad cluster, the number of vertices in its friend clusters is at most $n^{0.4} \log^{0.1} n$. 
    For each $1\le i\le 9n/10$, when we pick $\sigma_3(v_i)$, we have at least $n/10$ choices from the remaining element in $V$, and as $v_i$ is in a bad set, at most $n^{0.4} \log^{0.1} n$ of them will not create an unfriendly edge. Therefore, the probability that the edge we sample is not a bad edge is at most $\frac{1}{\sqrt{n}}$. Let $X_i$ be the indicator random variable such that $X_i = 1$ if $(v_i,\sigma_3(v_i))$ is not a bad edge. By Azuma's Inequality (Chernoff Bounds on martingales, see e.g., \cite{kuszmaul2021multiplicative}), 
    $$
        \Pr \left[ \sum_{i=1}^{2n/3} X_i > 4n/5 \right] < \left( \frac{5}{4\sqrt{n}} \right)^{4n/5} < n^{-n/3}.
    $$
    Thus, with probability at least $1-n^{-n/3}$, the set $\{(v_i,\sigma_3(v_i)) | 1 \le i \le 9n/10\}$ contains at least $9n/10-4n/5 = n/10$ bad edges.
\end{proof}
%Now we construct $\Gamma$ as follows: we sample the projection $\Gamma(v)$ for vertices one by one, but we will first pick $\Gamma(v)$ for vertices in bad sets first. For any vertex $v$ in the first half of the order, $v$ is inside a bad set, which means the number of vertices in friend sets is at most $n^{0.5}/2$, so with probability at most $\frac{n^{0.5}/2}{n/2} = n^{-0.5}$, $(v,\gamma(v))$ is not a bad edge, this is regardless of how previous vertices pick their projection. Therefore, the probability that $G(\Gamma)$ does not have at least $n/10$ bad edges is at most the probability that the sum of $n/2$ random indicator random variable with expectation $n^{-0.5}$ larger than $9n/10$. By Chernoff bound (\Cref{prop:chernoff}), the probability is at most $\left(\frac{e}{\sqrt{n}}\right)^{n} < n^{-n/3}$.
\end{proof}

Combining \Cref{clm: number of partition} and \Cref{clm: number of permutation}, we get that, over the randomness in the construction of $G'$, the probability that there exists a pair $(\fset,\sim)$ in which each cluster $F$ is a friend to at most $\log^{0.1\eps} n$ other clusters, such that $G'$ contains less than $n/10$ unfriendly edges, is at most $n^{-n/3}\cdot n^{n/4}=n^{-n/12}$.
%The proof of the lemma is finished by taking union bound on all these choices.
This completes the proof of \Cref{lem:small}.
\end{proof}

%By \Cref{lem:small}, there are at least $n/10$ edges has cost at least $\log \log n/30$ in the extension, and thus the cost of extension is $\Omega(n \log \log n)$.

\subsubsection*{Case 2: The total size of large clusters is greater than $0.1n$}

We denote by $V'$ the union of all large clusters in $\fset$.
We start by proving the following claim.

\begin{claim} \label{clm:paths}
There exists a collection of $k/4$ edge-disjoint paths in $G$, such that each path connects a distinct terminal to a distinct vertex of $V'$.
\end{claim}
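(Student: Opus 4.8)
The plan is to produce the $k/4$ edge-disjoint paths by a max-flow/min-cut (Menger-type) argument. Set up an auxiliary flow network: add a super-source $s$ connected to every terminal $t \in T$ by a unit-capacity edge, add a super-sink $s'$ connected from every vertex of $V'$ by a unit-capacity edge, and give every original edge of $G$ unit capacity. By the integral max-flow/min-cut theorem, a collection of $k/4$ edge-disjoint paths each connecting a distinct terminal to a distinct vertex of $V'$ exists iff every edge cut separating $\{s\}$ from $\{s'\}$ has size at least $k/4$; the distinctness on both ends is automatic from the unit capacities on the $s$--$T$ and $V'$--$s'$ edges.

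So the real work is the cut lower bound. Consider any vertex subset $S \subseteq V$ and the corresponding cut; write $a = |T \setminus S|$ for the terminals we pay for on the source side (those terminals not in $S$ contribute their $s$-edge), and note we also pay $|V' \cap S|$ for the sink side, plus $|E_G(S, V \setminus S)|$ for the crossing graph edges. If either $|T\setminus S| \ge k/4$ or $|V' \cap S| \ge k/4$ we are immediately done, so assume both are $< k/4$. Then $S$ contains at least $3k/4$ terminals, so $|S| \ge 3k/4$, and $S$ misses at least $0.9n - |V'\cap S| \ge 0.9n - k/4$ vertices of $V'$ wait—more carefully, $|V' \setminus S| \ge |V'| - |V'\cap S| \ge 0.1n - k/4$, which since $k = \Theta(n\log\log n/\log n) = o(n)$ is $\ge 0.05n$ for large $n$. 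Thus both $S$ and $V\setminus S$ have size $\Omega(n)$ (the side $V\setminus S$ because it contains $\ge 0.05n$ vertices of $V'$, the side $S$ because it has $\ge 3k/4$ terminals — if $3k/4$ is too small to invoke expansion directly, use instead that $V\setminus S$ is large and apply conductance from that side). Now invoke \Cref{lem:expansion}: with probability $1-o(1)$, $\Phi(G') = \Omega(1)$, and since $G$ is $G'$ with at most $n^{0.3}$ edges removed and is (near-)regular, $G$ still has edge-expansion $\Omega(1)$ on sets of size between $0.05n$ and $n - 0.05n$ wait — one must handle the case where $S$ is very large. If $|S| > n - 0.05n$ then actually $V\setminus S$ is small; but $V\setminus S$ must then contain fewer than $0.05n$ vertices of $V'$... this contradicts $|V'\setminus S|\ge 0.05n$. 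Hence $0.05n \le |V \setminus S|$ and symmetrically we can also get $|S|$ not too large by the same token, so $\min\{|S|, |V\setminus S|\}\cdot \deg = \Omega(n)$ and $|E_G(S,V\setminus S)| = \Omega(n) \gg k/4$ since $k = o(n)$. This bounds the cut below by $k/4$, completing the argument.

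The main obstacle I expect is the bookkeeping at the boundary cases of the cut argument: precisely, making sure that in every case at least one of the three contributions to the cut ($|T\setminus S|$, $|V'\cap S|$, $|E_G(S,V\setminus S)|$) is $\ge k/4$, handling both the "$S$ small'' and "$S$ large'' regimes uniformly, and verifying that the edge-expansion of $G$ (not $G'$) on linear-sized sets survives the removal of the $\le n^{0.3}$ short-cycle edges — this last point needs only that removing $o(n)$ edges changes $|E_G(S,V\setminus S)|$ by $o(n)$, which is negligible against the $\Omega(n)$ lower bound. I would also double-check the regime $k/4$ versus the expansion threshold: since $k = \Theta(n\log\log n/\log n)$, we have $k/4 = o(n)$, so any cut of size $\Omega(n)$ trivially exceeds $k/4$ for $n$ large, which is what makes the whole argument go through cleanly. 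One final remark: the claim is stated for $G$ (deterministically, given the instance), so the $1-o(1)$-probability expansion statement is used implicitly under the conditioning that the random construction succeeded, consistent with how the rest of the proof treats the instance.
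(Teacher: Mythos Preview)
Your approach is essentially the same as the paper's: build an auxiliary flow network with a super-source attached to $T$ and a super-sink attached to $V'$, then lower-bound every $s$--$s'$ cut. The one structural difference is that the paper runs the max-flow argument in $G'$ (the graph \emph{before} short-cycle removal), obtains $k/3$ edge-disjoint paths there, and then observes that at most $n^{0.3} < k/12$ of them can use a deleted edge, leaving $\ge k/4$ paths in $G$. You instead work in $G$ directly and absorb the $\le n^{0.3}$ deleted edges into the cut bound. Both routes are fine.

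There is, however, a real slip in your case analysis. After reducing to $|T\setminus S| < k/4$ and $|V'\cap S| < k/4$, you correctly get $|S|\ge 3k/4$ and $|V\setminus S|\ge 0.05n$, but then you assert $\min\{|S|,|V\setminus S|\}\cdot\deg = \Omega(n)$ and hence $|E_G(S,V\setminus S)| = \Omega(n)$. This is false when $|S|$ is the smaller side: then $\min\{|S|,|V\setminus S|\} = |S|$, which may be only $3k/4 = o(n)$. Your fallback ``apply conductance from the other side'' does not help, because the conductance bound $|E(S,\bar S)|\ge \Phi\cdot\min\{\mathrm{vol}(S),\mathrm{vol}(\bar S)\}$ is governed by the \emph{minimum}-volume side regardless of which side you ``look from''. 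The correct statement is: if $|S|\le n/2$ then $|E_{G'}(S,\bar S)|\ge \Omega(|S|)\ge \Omega(k)$, while if $|S|>n/2$ then $|E_{G'}(S,\bar S)|\ge \Omega(|V\setminus S|)\ge \Omega(n)$; in either case the cut is $\Omega(k)$, and subtracting $n^{0.3}=o(k)$ for the deleted edges still leaves $\Omega(k)\ge k/4$ once the expansion constant is tracked. So your conclusion survives, but not via the $\Omega(n)\gg k/4$ reasoning you wrote. (The paper's own write-up is equally informal about the precise expansion constant, so this is a sloppiness shared with the source, not a fatal gap.)
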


\begin{proof}
We construct a graph $\hat G$ as follows. We start from graph $G'$, and add two vertices $s,t$ to it. We then connect $s$ to each terminal in $T$ by an edge, and connect each vertex in $V'$ to $t$ by an edge. All edges in $\hat G$ has unit capacity. We claim that there exists a collection $\pset$ of $k/3$ edge-disjoint paths in $\hat G$, such that each path connects a distinct terminal to a distinct vertex of $V'$. Note that this implies \Cref{clm:paths}. This is because the number of edges in $E(G')\setminus E(G)$ is at most $n^{0.3}<k/12$, and each such edge is contained in at most one path in $\pset$ (since the paths in $\pset$ are edge-disjoint), so at least $k/3-k/12\ge k/4$ paths in $\pset$ are entirely contained in $G$.
We now prove the claim.
From the max-flow min-cut theorem, it suffices to show that the minimum $s$-$t$ cut in $\hat G$ contains at least $k/3$ edges.

%Consider the following flow problem: we add a source $s$ and a sink $t$ into $G$, then we add an edge between $s$ and each terminal, and add an edge between $t$ and each vertex in large clusters. All edge has capacity $1$. To prove the claim, it is sufficient to prove that the maximum $s$-$t$ flow is at least $k/3$.

Consider any $s$-$t$ cut $(S \cup \set{s},(V \setminus S) \cup \set{t})$ in $\hat G$ and denote by $E'$ the set of edges in this cut. We distinguish between the following cases.

Case 1: $|S|\le |V|/2$. Recall that $G$ is a $6$-regular graph, so $\sum_{v\in S}\deg(v)\le \sum_{v\notin S}\deg(v)$.
Then from \Cref{lem:expansion}, $|E'|\ge \sum_{v\in S}\deg(v)/2\ge \card{S}/2$. If $\card{S} \ge 2k/3$, then $|E'|\ge k/3$. If $\card{S} < 2k/3$, then at least $k/3$ terminals lie in $V\setminus S$. As there is an edge connecting $s$ to each terminal, $|E'|\ge k/3$. 

Case 2: $|S|> |V|/2$.
Via similar arguments, we can show that $|E'|\ge |V'|/3\ge 0.1n/3> k/3$.
\end{proof}

We denote by $\pset$ the collection of paths given by \Cref{clm:paths}.
We now use these paths to complete the proof.
Consider such a path $P=(u_1,\ldots, u_r)$. Denote by $F_i$ the cluster that contains $u_i$, then the contribution of $P$ to the cost $\vol(\fset,\delta)$ is
\[\sum_{(u_i,u_{i+1})\in E(P)}\delta(F_i,F_{i+1})=\sum_{1\le i\le r-1}\dist_{G}(v(F_i),v(F_{i+1}))\ge \dist_{G}(v(F_1),v(F_{r}))\ge \dist_{G'}(v(F_1),v(F_{r})).\]
(We have used the property that for every pair $v,v'\in V$, $\dist_G(v,v')\ge \dist_{G'}(v,v')$, as $G$ is obtained from $G'$ by only deleting edges.)

Recall $P$ connects a terminal to a vertex in $V'$.
Recall that each large cluster has size at least $n^{0.1}$, so there are at most $n^{0.9}$ of them. Therefore, if we denote by $V''$ the subset of vertices that large clusters corresponds to, then $|V''|\le n^{0.9}$.
For each path $P\in \pset$, we denote by $t_P$ the terminal endpoint of $P$ (that is, $u_1=v(F_1)=t_P$), and by $v''_P$ the vertex that the cluster containing $u_r$ corresponds to (that is, $v''_P=v(F_r)$), then $\sum_{(u_i,u_{i+1})\in E(P)}\delta(F_i,F_{i+1})\ge \dist_{\ell}(t_P,v''_P)$.
As the paths in $\pset$ are edge-disjoint, their contribution to $\vol(\fset,\delta)$ can be added up, i.e., 
\begin{equation}\label{eqn: cost}
\vol(\fset,\delta)\ge \sum_{P\in \pset}\dist_{\ell}(t_P,v''_P).
\end{equation}

On the one hand, as graph $G'$ is $6$-regular, for each $v''\in V''$, the number of vertices at distance at most $\log n/100$ to $v''$ is at most $6^{\log n/100}\le n^{1/30}$. Therefore, there are at most $n^{1/30}\cdot n^{0.9}=n^{14/15}$ terms on the RHS of \Cref{eqn: cost} that at most $\le \log n/100$.
On the other hand, there are at least $k/4=\Omega(\frac{n\log \log n}{\log n})$ terms on the RHS of \Cref{eqn: cost}, so at least $k/4-n^{14/15}\ge k/5$ terms as value at least $\log n/100$. Consequently, $\vol(\fset,\delta)\ge (k/5)\cdot (\log n/100)=\Omega(k \log n)=\Omega(n\log\log n)$.

%Consider the paths given by \Cref{clm:paths} in the extension, they start with a different terminal, each of them ends with a center of a large cluster, and they are still edge disjoint, which means the cost of extension is at least the total length of these paths. Since each large cluster has size at least $n^{0.1}$, there are at most $n^{0.9}$ of them. Since the graph is a degree $6$ regular graph, each center has distance less than $\log n/100$ to at most $n^{1/30}$ terminals. Thus, there are at most $n^{14/15}<k/6$ paths has length at most $\log n/100$. Thus, the total length of these paths is at least $(k/6) \cdot (\log n/100) = \Omega(n\sqrt{\log n})$.

\newcommand{\dia}{\textsf{diam}}
\newcommand{\girth}{g}
\newcommand{\con}{\textnormal{\textsf{con}}}
\newcommand{\lcon}{\ell^{\con}}
\newcommand{\Vcon}{V^{\con}}
\subsection{Completing the Proof of \Cref{thm: 0-ext with Steiner lower}}
\label{subsec: mapping}

We have shown in \Cref{subsec: canonical} all canonical solutions with size $O(k\log^{1-\eps} n)$ have cost $\Omega(\eps n\log\log n)$. In this subsection, we complete the proof of \Cref{thm: 0-ext with Steiner lower} by showing that, intuitively, an arbitrary solution $(\fset,\delta)$ to the instance $(G,T,D)$ can be ``embedded'' into a canonical solution, without increasing its cost by too much. 

We start by introducing the notion of \emph{continuization}.

\paragraph{Continuization of a graph.}
Let $G=(V,E,\ell)$ be an edge-weighted graph. Its \emph{continuization} is a metric space $(V^{\con},\ell^{\con})$, that is defined as follows. 
Each edge $(u,v)\in E$ is viewed as a continuous line segment $\con(u,v)$ of length $\ell_{(u,v)}$ connecting $u,v$, and the point set $V^{\con}$ is the union of the points on all lines $\set{\con(u,v)}_{(u,v)\in E}$. 
Specifically, for each edge $(u,v)\in E$, the line $\con(u,v)$ is defined as \[\con(u,v)=\set{(u,\alpha)\mid 0\le \alpha\le \ell_{(u,v)}}=\set{(v,\beta)\mid 0\le \beta\le \ell_{(u,v)}},\]
where $(u,\alpha)$ refers to the unique point on the line that is at distance $\alpha$ from $u$, and $(v,\beta)$ refers to the unique point on the line that is at distance $\beta$ from $v$, so $(u,\alpha)=(v,\ell_{(u,v)}-\alpha)$.

The metric $\ell^{\con}$ on $V^{\con}$ is naturally induced by the shortest-path distance metric $\dist_{\ell}(\cdot,\cdot)$ on $V$ as follows.
For a pair $p,p'$ of points in $V^{\con}$, 
\begin{itemize}
\item if $p,p'$ lie on the same line $(u,v)$, say $p=(u,\alpha)$ and $p'=(u,\alpha')$, then $\lcon(p,p')=|\alpha-\alpha'|$;
\item if $p$ lies on the line $(u,v)$ with $p=(u,\alpha)$ and $p'$ lies on the line $(u',v')$ with $p'=(u',\alpha')$, then 
\[
\begin{split}
\lcon(p,p')=\min\{
&  \dist_{\ell}(u,u')+\alpha+\alpha', \quad
	\dist_{\ell}(u,v')+\alpha+(\ell_{(u',v')}-\alpha'),\\
&	\dist_{\ell}(v,u')+(\ell_{(u,v)}-\alpha)+\alpha', \quad
	\dist_{\ell}(v,v')+(\ell_{(u,v)}-\alpha)+(\ell_{(u',v')}-\alpha')
\}.
\end{split}
\]
\end{itemize}
Clearly, every vertex $u\in V$ also belongs to $\Vcon$, and for every pair $u,u'\in V$, $\dist_{\ell}(u,u')=\lcon(u,u')$.
For a path $P$ in $G$ connecting $u$ to $u'$, it naturally corresponds to a set $P^{\con}$ of points in $\Vcon$, which is the union of all lines corresponding to edges in $E(P)$. The set $P^{\con}$ naturally inherits the metric $\lcon$ restricted on $P^{\con}$. We will also call $P^{\con}$ a path in the continuization $(\Vcon,\lcon)$.

We show that, for each graph $G$ with a set $T$ of terminals, then any other metric $w$ on a set of points containing $T$, such that $w$ restricted on $T$ is identical to $\dist_{G}$ restricted on $T$, can be ``embedded'' into the continuation of $G$, with expected stretch bounded by some structural measure that only depends on $G$. 
Specifically, we prove the following main technical lemma.

%In this section, we give a way to embed any metric into a graph $G$.

\begin{lemma} \label{lem:d-g}
Let $(G,T,\ell)$ be any instance of $\zesn$ such that $G$ is not a tree (so $\gir(G)<+\infty$), and let $(\fset,\delta)$ be any solution to it. Let $(\Vcon,\lcon)$ be the continuization of graph $G$.
Then there exists a random mapping $\phi: \fset \to \Vcon$, such that
\begin{itemize}
\item for each terminal $t\in T$, if $F$ is the (unique) cluster in $\fset$ that contains $t$, then $\phi(F)=t$; and
\item for every pair $F,F'\in \fset$,
%we can randomly construct a projection $f$ from the nodes in $\delta$ to the veritces in $V$ such that: for any pair of nodes $F_1$ and $F_2$,
$$\ex{\lcon(\phi(F),\phi(F'))} \le O\bigg(\frac{\diam(G)}{\gir(G)}\bigg)\cdot  \delta(F,F').$$
\end{itemize}
\end{lemma}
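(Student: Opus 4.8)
The plan is to build the random mapping $\phi$ by composing two operations: first, embed the abstract metric $\delta$ on $\fset$ into the \emph{tight span} (injective hull) of the terminal metric $D = \dist_G|_T$, which is legitimate by the result of \cite{chen20241+} cited in the overview (any $\zesn$ solution metric can be assumed to live on the tight span of $D$ while fixing the terminals); second, map the tight span into the continuization $(\Vcon,\lcon)$ in a way that is an isometry-up-to-$O(\diam/\gir)$ locally and fixes the terminals. The key geometric claim, flagged in the technical overview as the main conceptual contribution, is that \emph{for a high-girth graph the tight span of its terminal shortest-path metric coincides locally with the continuization $\Vcon$}: around each point of $\Vcon$ the metric $\lcon$ is already injective/hyperconvex at scales below $\gir(G)/2$, since within such a ball the graph looks like a tree and the tight span of a tree metric is the tree itself. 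So I would first establish this local-coincidence statement, then patch the local pieces together, paying a multiplicative cost of roughly the diameter-to-girth ratio when a pair $F,F'$ is far apart and the geodesic between their images must be rerouted through $\Vcon$.

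Concretely, the steps I would carry out are as follows. (1) Recall/set up the tight span $\mathcal{T}(D)$ of the terminal metric, with $T$ isometrically embedded and $\delta$ realized by a (non-expansive) map $\fset \to \mathcal{T}(D)$ fixing terminals. (2) Prove the local structure lemma: if $g = \gir(G)$, then for any point $p \in \Vcon$, the ball $B(p, g/4)$ in $(\Vcon,\lcon)$ is isometric to a subset of a tree (an $\mathbb{R}$-tree), hence its own tight span; this is where high girth is essential, because a short cycle would create a non-tree-like ball. (3) Define the random map from $\mathcal{T}(D)$ to $\Vcon$: cover $\mathcal{T}(D)$ by regions of diameter $O(g)$, on each region use the local tree-isometry to land in $\Vcon$, and randomly glue the pieces (the randomization smooths out the boundary mismatches, which is why the bound is in expectation). (4) For a pair $F, F'$: if $\delta(F,F') \le g/c$ for a suitable constant $c$, the two images land in a common local tree-isometric region and the distance is preserved up to $O(1)$; if $\delta(F,F') > g/c$, use that $\lcon(\phi(F),\phi(F')) \le \diam(G)$ trivially while $\delta(F,F') > g/c$, giving the ratio $O(\diam(G)/\gir(G))$ directly. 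Combining the two regimes yields the stated bound. (5) Finally feed this back: apply the lemma to the expander instance $G$, where $\diam(G) = O(\log n)$ by \Cref{cor: diam} and $\gir(G) \ge (\log n)/100$ by \Cref{obs: girth}, so $\diam(G)/\gir(G) = O(1)$; then $\phi$ turns an arbitrary solution into (essentially) a canonical one at $O(1)$ cost blow-up, and the size bound $|\fset| \le f(k) = O(k\log^{1-\eps} k)$ is preserved, so the Case-1/Case-2 analysis of \Cref{subsec: canonical} applies to give cost $\Omega(\eps n \log\log n)$.

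The main obstacle I anticipate is making step (3) — the global assembly of the local tree-isometric embeddings into a single random map into $\Vcon$ — both well-defined and contractive enough. The difficulty is that $\mathcal{T}(D)$ is a finite polyhedral complex whose cells need not have a consistent "tree" orientation, and a point of $\mathcal{T}(D)$ at moderate distance from all terminals has no canonical preimage in the graph; one must choose, for each such point, a terminal (or nearby graph vertex) to "anchor" it, and different anchor choices for nearby points must be reconciled. The randomization (akin to a random partition / random shift of the anchor assignment, in the spirit of the continuization machinery of \cite{chen2023towards}) is what controls the expected distortion across anchor boundaries, and checking that the expected stretch across such a boundary is $O(\diam/\gir)$ rather than something larger is the delicate quantitative heart of the argument. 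A secondary technical point is verifying the terminal-fixing constraint survives the construction — this should be automatic if every region containing a terminal uses that terminal as its anchor, but it needs to be stated carefully so the local isometries genuinely send $t \mapsto t$.
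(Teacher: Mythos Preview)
Your proposal is in the right spirit but takes a substantially more complicated route than the paper, and the part you yourself flag as the main obstacle --- step (3), globally assembling a map $\mathcal{T}(D)\to \Vcon$ --- is exactly what the paper sidesteps.

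The paper never constructs or mentions the tight span in the proof. Instead it builds $\phi:\fset\to\Vcon$ directly, cluster by cluster, using only the distances $\delta(F,F_t)$ to terminal clusters. For each $F$ it defines a Gromov-product-type quantity
\[
\nu(F)=\min_{t,t'}\tfrac12\Big(\delta(F,F_t)+\delta(F,F_{t'})-\tfrac12\,\delta(F_t,F_{t'})\Big),
\]
takes the minimizing pair $t_1,t_2$, and places $\phi(F)$ on the (unique, by high girth) $t_1$--$t_2$ shortest path in $\Vcon$ at the appropriate offset. This is done only for clusters with $A_F:=\min_t\delta(F,F_t)\le r$, where $r$ is drawn uniformly from $[g/60,g/30]$; all clusters with $A_F>r$ are collapsed to a single fixed terminal $t^*$. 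The analysis then has exactly your two regimes: if $\delta(F,F')>g/60$ use the trivial bound $\lcon\le\diam(G)$; if $\delta(F,F')\le g/60$, show that with probability $1-O(\delta(F,F')/g)$ both $F,F'$ fall on the same side of the threshold (giving $\lcon\le 2\delta(F,F')$ via a direct triangle-inequality computation in the local tree), and otherwise pay $\diam(G)$.

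So the comparison is: your steps (1)+(3) factor the map through $\mathcal{T}(D)$, which forces you to solve a genuine gluing problem on an abstract polyhedral complex; the paper replaces this composite with a single explicit formula per cluster, and the only randomness is a scalar threshold. The ``anchor'' you anticipated needing is just the pair $(t_1,t_2)$ realizing $\nu(F)$, and the ``random partition'' is just the event $\{A_F\le r\}$. Your concern about reconciling anchor choices for nearby clusters is handled by a short computation (their Claim~3.12) showing that if $A_F,A_{F'},\delta(F,F')\le r$ then $\phi(F)$ lies on the $\phi(F')$--$t_1$ geodesic (or the $t_2$ one), again because all relevant distances sum to less than $g$. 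The tight-span language in the overview is motivational; the proof itself is elementary and self-contained.
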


Before we prove \Cref{lem:d-g} in \Cref{subsec: diam-girth}, we provide the proof of \Cref{thm: 0-ext with Steiner lower} using it.

\emph{Proof of \Cref{thm: 0-ext with Steiner lower}.}
Consider any solution $(\fset,\delta)$ to the instance $(G,T,\ell)$ constructed in \Cref{subsec: instance} with size $|\fset|\le o(k\cdot \log^{1-\eps}k)$. 
From \Cref{obs: girth} and \Cref{cor: diam}, $\frac{\diam(G)}{\gir(G)}\le \frac{O(\log n)}{(\log n)/100}=O(1)$. From \Cref{lem:d-g}, there exists a random mapping $\phi: \fset \to \Vcon$, such that for every pair $F,F'\in \fset$,
%we can randomly construct a projection $f$ from the nodes in $\delta$ to the veritces in $V$ such that: for any pair of nodes $F_1$ and $F_2$,
$\ex{\lcon(\phi(F),\phi(F'))} \le O(1)\cdot  \delta(F,F')$. 

Fix such a mapping $\phi$, we define a canonical solution $(\fset,\hat \delta)$ based on $(\fset,\delta)$ as follows. The collection of clusters is identical to the collection $\fset$. For each $F\in \fset$, recall that $\phi(F)$ is a point in $\Vcon$. Assume the point $\phi(F)$ lies on the line $(u,v)$ and is closer to $u$ than to $v$ (i.e., $\lcon(\phi(F),u)\le\lcon(\phi(F),v)$), then we let $u$ be the vertex in $V$ that it corresponds to. For each pair $F,F'\in \fset$, with $F$ corresponding to $u_F$ and and $F'$ corresponding to $u_{F'}$, we define $\hat\delta(F,F')=\dist_{\ell}(u_F,u_{F'})$. 
As graph $G$ in the instance $(G,T,\delta)$ constructed in \Cref{subsec: instance} is an unweighted graph, it is easy to see that
\[\hat\delta(F,F')=\dist_{\ell}(u_F,u_{F'})\le \lcon(\phi(F),\phi(F'))+2.\]
As the mapping $\phi$ is random, $\hat{\delta}$ is also random, and so $\ex{\hat\delta(F,F')}\le O(1)\cdot  \delta(F,F') +2$.

From the properties of mapping $\phi$ in \Cref{lem:d-g}, we are guaranteed that such a solution $(\fset,\hat\delta)$ is a canonical solution. Moreover, from linearity of expectation, 
$$\ex{\vol(\fset,\hat\delta)}
=\ex{\sum_{(u,v)\in E}\hat\delta(F(u),F(v))}
=\sum_{(u,v)\in E}O\bigg(\delta(F(u),F(v))\bigg)+2
=O\bigg(\vol(\fset,\delta)\bigg)+O(n).
$$
Therefore, it follows that there exists a canonical solution $(\fset,\hat\delta)$, such that $\vol(\fset,\hat\delta)\le O(\vol(\fset,\delta)+n)$.
As we have shown in \Cref{subsec: canonical} that any canonical solution $(\fset,\hat\delta)$ with $|\fset|\le o(k\log^{1-\eps}k)$ satisfies that $\vol(\fset,\hat\delta)=\Omega(\eps n\log\log n)$, it follows that $\vol(\fset,\delta)=\Omega(\eps n\log\log n)$.
%As $G$ is an unweighted graph with $O(n)$ edges, the solution $(\fset,\hat\delta)$ incurs a stretch of $\Omega(\eps \log\log n)=\Omega(\eps \log\log k)$.
This implies that the integrality gap of (\textsf{LP-Metric}) is at least $\Omega(\eps \log\log n)$.
\qed

\subsection{Proof of \Cref{lem:d-g}}
\label{subsec: diam-girth}

In this subsection, we provide the proof of \Cref{lem:d-g}. We first consider the special case where $G$ is a tree, and then prove \Cref{lem:d-g} for the general case.
%Before we prove the statement in any graph, we first prove that in the case when $G$ is a tree, we can preserve the disance. 

\begin{lemma} \label{lem:embed-tree}
Let $(G,T,\ell)$ be an instance of $\zesn$ where $G$ is a tree. Let $(\fset,\delta)$ be an solution to it. Let $(\Vcon, \lcon)$ be the continuization of $G$. Then there exists a mapping $\phi: \fset \to \Vcon$, such that
\begin{itemize}
	\item for each terminal $t\in T$, if $F$ is the (unique) cluster in $\fset$ that contains $t$, then $\phi(F)=t$; and
	\item for every pair $F,F'\in \fset$,
	%we can randomly construct a projection $f$ from the nodes in $\delta$ to the veritces in $V$ such that: for any pair of nodes $F_1$ and $F_2$,
	$\lcon(\phi(F),\phi(F')) \le \delta(F,F')$.
\end{itemize}
%Given a 0-Extension instance $(G,T,\ell_e)$ where $G$ is a tree, and given a metric $\delta$ that contains all vertices in $T$, we can randomly construct a projection $f$ from the nodes in $\delta$ to the points (including vertices and points on the edge) such that: for any pair of nodes $F_1$ and $F_2$, $\ell(\phi(F_1),\phi(F_2)) \le \delta(F_1,F_2)$.
\end{lemma}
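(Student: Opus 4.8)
The plan is to reduce the tree case to an explicit construction based on the geodesic (tree) structure of $G$. Since $G$ is a tree, its continuization $(\Vcon,\lcon)$ is itself a real tree (an $\mathbb{R}$-tree): between any two points there is a unique simple path, and $\lcon$ is the length of that path. The key fact I would use is that $(\Vcon,\lcon)$ is precisely the \emph{tight span} of the terminal metric $\dist_G$ restricted to $T$ — more concretely, since any tree metric is its own tight span, and the non-terminal structure of $G$ only adds ``dangling'' or ``internal'' points that do not affect the span of $T$, the metric $\delta$ on $\fset$ (which agrees with $\dist_G$ on the terminal clusters) admits a canonical non-expansive extension into $\Vcon$. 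So the real content is: given that $(\fset,\delta)$ is an arbitrary semimetric extending the tree metric on $T$, embed $\fset$ into the $\mathbb{R}$-tree $\Vcon$ fixing the terminals, contracting distances.

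First I would set up coordinates. Root the tree $G$ at an arbitrary vertex and, for each cluster $F\in\fset$, I want to define a point $\phi(F)\in\Vcon$. The natural definition: for a non-terminal cluster $F$, consider the ``Kuratowski-type'' function $h_F:T\to\mathbb{R}_{\ge 0}$ given by $h_F(t)=\delta(F,F_t)$ where $F_t$ is the cluster containing terminal $t$. Because $\delta$ is a semimetric agreeing with the tree metric on $T$, the function $h_F$ satisfies $h_F(t)+h_F(t')\ge \dist_G(t,t')$ for all $t,t'$; I would then let $\phi(F)$ be the point of $\Vcon$ that best ``realizes'' $h_F$, e.g.\ the point $p$ minimizing $\max_t\big(\lcon(p,t)-h_F(t)\big)_+$ (ties broken canonically), which in a tree is well-defined and lies on the Steiner-tree skeleton of $T$ inside $\Vcon$. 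For terminal clusters $F=F_t$ one checks $h_{F_t}=\dist_G(t,\cdot)$ and the minimizer is exactly $t$, giving the first bullet. For the contraction bound, I would argue: for any $F,F'$, the unique path in $\Vcon$ between $\phi(F)$ and $\phi(F')$ passes through (or near) the terminal skeleton, and by a ``four-point''/Helly argument on trees, $\lcon(\phi(F),\phi(F'))\le \max_t|h_F(t)-h_{F'}(t)|\le \delta(F,F')$, where the last inequality is the triangle inequality for $\delta$. This is the crux of the tree case.

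Concretely, the order of steps is: (1) observe $(\Vcon,\lcon)$ is an $\mathbb{R}$-tree and record the unique-path property; (2) for each cluster define the distance-profile function $h_F$ to terminals and verify it satisfies the span inequalities; (3) define $\phi(F)$ as the canonical representative of $h_F$ in $\Vcon$, and verify $\phi(F_t)=t$; (4) prove the one-Lipschitz bound $\lcon(\phi(F),\phi(F'))\le\delta(F,F')$ via the tree four-point condition / Helly property of subtrees; (5) conclude. An alternative, perhaps cleaner, route: build $\phi$ greedily by processing clusters in an order and using that in a tree one can always place a new point to simultaneously respect all already-placed distances up to non-expansion, since the constraints $\{x:\lcon(x,\phi(F_i))\le \delta(F,F_i)\}$ are subtrees of $\Vcon$ with pairwise nonempty intersection (by the triangle inequality on $\delta$), hence by Helly for subtrees of a tree their common intersection is nonempty; pick $\phi(F)$ there, with terminals placed first so they land on themselves.

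The main obstacle I anticipate is step (4)/(the Helly step): making precise that the balls $\{x\in\Vcon:\lcon(x,\phi(F_i))\le\delta(F,F_i)\}$ are genuinely ``subtree-like'' (closed, connected, and closed under taking medians) in the continuization, and that their pairwise intersections are nonempty \emph{given only that the already-chosen points satisfy $\lcon(\phi(F_i),\phi(F_j))\le\delta(F_i,F_j)$ rather than equality}. One has to check that non-expansion of the partial embedding, plus the triangle inequality $\delta(F,F_i)+\delta(F,F_j)\ge\delta(F_i,F_j)\ge\lcon(\phi(F_i),\phi(F_j))$, suffices for the two balls around $\phi(F_i)$ and $\phi(F_j)$ to meet — which is exactly the statement that in an $\mathbb{R}$-tree two balls intersect iff the sum of their radii is at least the distance between centers. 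Once that metric fact about $\mathbb{R}$-trees is isolated and proved (it is standard but worth stating), the induction goes through cleanly, and the terminal-fixing condition is handled simply by inserting all terminal clusters first, where the only ball containing them forces $\phi(F_t)=t$ since the profile is exact there.
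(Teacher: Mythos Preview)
Your Helly-based route (approach 2) is correct and complete: in an $\mathbb{R}$-tree, closed balls are subtrees, two balls $B(\phi(F_i),\delta(F,F_i))$ and $B(\phi(F_j),\delta(F,F_j))$ meet iff $\delta(F,F_i)+\delta(F,F_j)\ge \lcon(\phi(F_i),\phi(F_j))$, which holds by the triangle inequality for $\delta$ together with the inductive non-expansion hypothesis, and then the Helly property for subtrees gives a common point. Placing the terminal clusters first at their terminals and then extending greedily yields the desired $\phi$. This is a genuinely different argument from the paper's.

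The paper instead gives an explicit, non-inductive formula: for a non-terminal cluster $F$ it sets
\[
\nu(F)=\min_{t,t'\in T}\tfrac{1}{2}\big(\delta(F,F_t)+\delta(F,F_{t'})-\delta(F_t,F_{t'})\big),
\]
takes the minimizing pair $t_1,t_2$, and places $\phi(F)$ on the $t_1$--$t_2$ geodesic at distance $\delta(F,F_{t_1})-\nu(F)$ from $t_1$. A short claim then shows $\lcon(\phi(F),t)\le \delta(F,F_t)-\nu(F)$ for every terminal $t$, and from this the contraction bound follows by a two-sided estimate that cancels the $\nu$-terms. Your approach 1 is morally this same construction (subtracting $\nu(F)$ is exactly the canonical projection of the profile $h_F$ onto the tight span), but as you wrote it the step $\lcon(\phi(F),\phi(F'))\le \max_t|h_F(t)-h_{F'}(t)|$ is not true for an arbitrary point of the ball-intersection; it requires the specific extremal choice, which you left as ``ties broken canonically''. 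The paper's $\nu(F)$ is precisely what makes that tie-breaking work.

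What each approach buys: your Helly argument is shorter and conceptually cleaner for the tree case, and it never needs to name a specific point. The paper's explicit formula, however, is what drives the general (non-tree) case in \Cref{lem:d-g}: there the graph is not a tree, Helly for balls fails outright, and the proof proceeds by reusing the same $\nu(F)$-based placement on a locally-unique shortest path, exploiting the high girth. So if you adopt the Helly proof here, be aware you will still need the paper's explicit construction (or something equivalent) for the next lemma.
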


\begin{proof}
For each terminal $t\in T$, we denote by $F_t$ the cluster in $\fset$ that contains it, and set $\phi(F_t)=t$.
For each cluster $F\in \fset$ that does not contain any terminals, we define \[\nu(F)=\min\set{\frac{1}{2}\cdot \big(\delta(F,F_t)+\delta(F,F_{t'})-\delta(F_t,F_{t'})\big)\mid t,t'\in T}.\]
%As $\delta$ is a metric on $\fset$, $\nu(F)\ge 0$. 
Denote by $t_1,t_2$ the pair of terminals $(t,t')$ that minimizes $(\delta(F,F_t)+\delta(F,F_{t'})-\delta(F_t,F_{t'}))/2$.
As $G$ is a tree, there is a unique shortest path connecting $t_1$ to $t_2$ in $G$, and therefore there exists a unique point $p$ in $\Vcon$ (that lies on the $t_1$-$t_2$ shortest path in $\Vcon$) with $\lcon(p,t_1)=\delta(F,F_{t_1})-\nu(F)$ and $\lcon(p,t_2)=\delta(F,F_{t_2})-\nu(F)$. We set $\phi(F)=p$.
	
%For any node $s$ in $\delta$, let $\nu(s)$ be the maximum distance $\nu$ such that for any two terminals $t_1$ and $t_2$ in $T$, $\delta(s,t_1)+\delta(s,t_2) \ge \delta(t_1,t_2) + 2\nu$. Since $\delta$ is a metric, we have $\nu(s) \ge 0$. Moreover, since $\nu(S)$ is the maximum possible distance, there exist two terminals $t_1(s)$ and $t_2(s)$ such that $\delta(s,t_1(s))+\delta(s,t_2(s)) = \delta(t_1(s),t_2(s)) + 2\nu(s)$. Let $\phi(s)$ be the point in the tree path between $t_1(s)$ and $t_2(s)$ such that $\ell(\phi(s),t_1(s)) = \delta(s,t_1(s)) - \nu(s)$ and $\ell(\phi(s),t_2(s)) = \delta(s,t_2(s)) - \nu(s)$. 

%Before we prove the correctness, we first prove two proposition of the projection $f$.
 
\iffalse
\begin{claim} \label{clm:t-mid}
        Given three nodes $t,t_1,t_2$ on a tree such that $t$ is on the tree path between $t_1$ and $t_2$, for any node $t'$ on the tree, $t$ is either on the tree path between $t'$ and $t_1$ or the tree path between $t'$ and $t_2$.
    \end{claim}

    \begin{proof}
        Since the path between $t'$ and $t_1$ and the path between $t'$ and $t_2$ fully contain the path between $t_1$ and $t_2$. So $t$ must be on one of these two paths
    \end{proof}
\fi

\begin{claim} \label{clm:t-proj}
For every cluster $F\in \fset$ and every terminal $t\in T$, $\lcon(\phi(F),t) \le \delta(F,F_t) - \nu(F)$. 
\end{claim}

\begin{proof}
Note that the point $\phi(F)$ lies on the (unique) shortest path between a pair $t_1,t_2$ of terminals.
For $t\in \set{t_1,t_2}$, clearly the claim holds.
Consider any other terminal $t$. Clearly $\phi(F)$ lies on either the path connecting $t$ to $t_1$ or the path connecting $t$ to $t_2$. 
Assume without lose of generality that $\phi(F)$ is on the path connecting $t_1$ and $t$.
%By \Cref{clm:t-mid}, without lose of generality, suppose $\phi(F)$ is on the path between $t_1(F)$ and $t$. 
Since $G$ is a tree, $\lcon(\phi(F),t_1)+\lcon(\phi(F),t) = \lcon(t_1,t) = \dist_{\ell}(t_1,t)$. On the other hand, by definition of $\nu(F)$ and $\phi(F)$, $\delta(F,F_{t_1})+\delta(F,F_t) \ge \delta(F_{t_1},F_{t}) + 2 \cdot\nu(F)$ holds, and $\lcon(\phi(F),t_1) = \delta(F,F_{t_1}) - \nu(F)$. Therefore, $\delta(F,F_t) \ge \lcon(\phi(F),t) + \nu(F)$.
\end{proof}
    
We now show that, for every pair $F,F'\in \fset$, $\lcon(\phi(F),\phi(F')) \le \delta(F,F')$. 
Denote by $t_1,t_2$ the pair of terminals whose shortest path contains $\phi(F)$, and by $t'_1,t'_2$ the pair of terminals whose shortest path contains $\phi(F')$.
Assume without lose of generality that $\phi(F)$ is on the tree path between $\phi(F')$ and $t_1$, so $\lcon(\phi(F'),\phi(F)) + \lcon(\phi(F),{t_1}) = \lcon(\phi(F'),{t_1})$. On the other hand, from the definition of $\phi(F)$ and \Cref{clm:t-proj}, $\lcon(\phi(F),{t_1}) = \delta(F,F_{t_1}) - \nu(F)$ and $\lcon(\phi(F'),{t_1}) \le \delta(F',F_{t_1}) - \nu(F')$. Therefore, 
$$\lcon(\phi(F),\phi(F')) \le \delta(F',F_{t_1}) - \delta(F,F_{t_1}) - \nu(F') + \nu(F) \le \delta(F,F') + \nu(F) - \nu(F').$$ Similarly, $\lcon(\phi(F),\phi(F')) \le \delta(F,F') + \nu(F') - \nu(F)$. Altogether, $\lcon(\phi(F),\phi(F')) \le \delta(F,F')$.
\end{proof}

In the remainder of this subsection, we complete the proof of \Cref{lem:d-g}.
%Now we prove the general case. Let $D$ be the diameter and $\girth$ be the girth of the graph. 
Denote $g=\gir(G)$. 
Let $r$ be a real number chosen uniformly at random from the interval $[g/60,g/30]$, so $r\le g/30$ always holds.
%Let $r<\girth/30$ be a parameter whose value is to be set later. %Let $t^*$ be an arbitrarily chosen terminal in $T$. 
For each terminal $t\in T$, we denote by $F_t$ the cluster in $\fset$ that contains it, and set $\phi(F_t)=t$, so the first condition in \Cref{lem:d-g} is satisfied.

For each cluster $F\in \fset$, we define $A_F=\min\set{\delta(F,F_t)\mid t\in T}$.
%Given a point $s$, define $A_s$ as the distance between $s$ and its closest terminal. Let $r$ be a distance which is smaller than $\girth/30$ and we will decide it later, and let $t$ be an arbitrary terminal. 
We first determine the image $\phi(F)$ for all clusters $F$ with $A_F \le r$, in a similar way as \Cref{lem:embed-tree} as follows.
%\znote{to add some intuition about the change}

%For any $s$ such that $A_s \le r$, we use similar projection scheme as the tree case, but with some modification. We project $s$ as follows: 
Define \[\nu(F)=\min\set{\frac{1}{2}\cdot \bigg(\delta(F,F_t)+\delta(F,F_{t'})-\frac{\delta(F_t,F_{t'})}{2}\bigg)\mid t,t'\in T}.\]
%As $\delta$ is a metric on $\fset$, $\nu(F)\ge 0$. 
Denote by $t_1,t_2$ the pair $(t,t')$ that minimizes the above formula. We prove the following claim.
%let $\nu(F)$ be the maximum distance $\nu$ such that for any two terminals (can be the same) $t_1$ and $t_2$ in $T$, $\delta(F,t_1)+\delta(F,t_2) \ge \delta(t_1,t_2)/2 + 2\nu$. Let $t_1(S)$ and $t_2(S)$ be those two terminals that make the inequatlity equal.

\begin{claim} \label{clm:g-tree}
$\delta(F,F_{t_1}) + \delta(F,F_{t_2}) \le 4\cdot A_F \le 4r$.
\end{claim}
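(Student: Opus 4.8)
\textbf{Proof proposal for Claim~\ref{clm:g-tree}.}
The plan is to exploit the definitions of $A_F$, $\nu(F)$ and the pair $(t_1,t_2)$, together with the only non-trivial ingredient available here, namely that $\delta$ is a semi-metric satisfying $\delta(F_t,F_{t'})=D(t,t')=\dist_\ell(t,t')\le \diam(G)$ but more importantly that distances between terminal-clusters cannot be too small relative to $g$. First I would let $t^*\in T$ be the terminal achieving $A_F=\delta(F,F_{t^*})$, and simply plug the pair $(t^*,t^*)$ (or, if we insist on distinct terminals, a pair in which one coordinate is $t^*$) into the expression defining $\nu(F)$; since $\delta(F_{t^*},F_{t^*})=0$ this gives $\nu(F)\le \tfrac12(2A_F-0)=A_F$, and in particular the minimizing pair $(t_1,t_2)$ satisfies $\delta(F,F_{t_1})+\delta(F,F_{t_2})-\tfrac12\delta(F_{t_1},F_{t_2})\le 2A_F$.

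Next I would bound $\delta(F_{t_1},F_{t_2})$ from above in terms of $A_F$. By the triangle inequality for $\delta$, $\delta(F_{t_1},F_{t_2})\le \delta(F_{t_1},F)+\delta(F,F_{t_2})$, so substituting into the previous inequality yields $\delta(F,F_{t_1})+\delta(F,F_{t_2})-\tfrac12\big(\delta(F,F_{t_1})+\delta(F,F_{t_2})\big)\le 2A_F$, i.e. $\tfrac12\big(\delta(F,F_{t_1})+\delta(F,F_{t_2})\big)\le 2A_F$, which is exactly $\delta(F,F_{t_1})+\delta(F,F_{t_2})\le 4A_F$. The second inequality $4A_F\le 4r$ is immediate from the case hypothesis $A_F\le r$ under which this claim is invoked.

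I expect the only subtlety — and hence the main thing to get right — is the bookkeeping about whether the pairs $(t,t')$ in the definition of $\nu(F)$ are required to be distinct. If distinctness is required, then plugging in $(t^*,t^*)$ is not allowed, and one must instead take a second-closest terminal $t^{**}$ and argue $\delta(F,F_{t^{**}})$ is not much larger than $A_F$; but this is not automatic in general, so almost certainly the intended reading is that $t=t'$ is permitted (consistent with the tree case in Lemma~\ref{lem:embed-tree}, where the same formula with the $t=t'$ choice gives $\nu(F)\le A_F$). Under that reading the argument above is complete and uses nothing beyond the triangle inequality for the semi-metric $\delta$ and the two definitions; the factor-$4$ slack (rather than the tighter bound one would get in the tree case) comes precisely from the $\tfrac12\delta(F_t,F_{t'})$ coefficient in this modified definition of $\nu(F)$, which is there to leave room for the ball-of-radius-$r$ argument in the high-girth setting that follows.
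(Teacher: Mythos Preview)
Your proposal is correct and essentially identical to the paper's proof: both plug the nearest terminal $t^*$ into the definition of $\nu(F)$ (with $t=t'=t^*$) to get $\nu(F)\le A_F$, then combine this with the triangle inequality $\delta(F_{t_1},F_{t_2})\le \delta(F,F_{t_1})+\delta(F,F_{t_2})$ to obtain $\delta(F,F_{t_1})+\delta(F,F_{t_2})\le 4\nu(F)\le 4A_F$. Your reading that $t=t'$ is permitted in the minimum defining $\nu(F)$ is exactly what the paper intends and uses.
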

\begin{proof}
Let $t$ be the terminal such that $\delta(F,F_t) = A_F$. By definition of $\nu(F)$, 
$$\nu(F) \le \frac{\delta(F,F_t)+\delta(F,F_t)- \frac{1}{2}\cdot \delta(F_t,F_t)}{2} = A_F.$$ 
On the other hand, from triangle inequality, %since $\delta(F_{t_1},F_{t_2}) \le \delta(F,F_{t_1})+\delta(F,F_{t_2})$, 
$$\nu(F) = \frac{\delta(F,F_{t_1})+\delta(F,F_{t_2}) - \frac 1 2\cdot \delta(F_{t_1},F_{t_2})}{2} \ge \frac{\delta(F,F_{t_1})+\delta(F,F_{t_2})}{4}.$$ Altogether, $\delta(F,F_{t_1})+\delta(F,F_{t_2}) \le 4\cdot A_F$. 
\end{proof}

From \Cref{clm:g-tree}, $\delta(F_{t_1},F_{t_2}) \le \delta(F,F_{t_1}) + \delta(F,F_{t_2})\le 4r < \girth/3$. Therefore, there is a unique shortest path connecting $t_1$ to $t_2$ in $G$, as otherwise $G$ must contain a cycle of length at most $2\girth/3$, contradicting the fact that $\gir(G)=\girth$. %the following claim is close to \Cref{clm:t-proj}.
 
We then set $\phi(F)$ to be the point in $\Vcon$ that lies in the $t_1$-$t_2$ shortest path, such that $\lcon(\phi(F),t_1) = 2\cdot (\delta(F,F_{t_1}) - \nu(F))$ and $\lcon(\phi(F),t_2) = 2\cdot (\delta(F,F_{t_2}) - \nu(F))$. Note that, by definition of $t_1,t_2$,
\[
\lcon(\phi(F),t_1)+\lcon(\phi(F),t_2)=2\cdot \delta(F,F_{t_1}) +2\cdot \delta(F,F_{t_2}) - 4\cdot \nu(F)=\delta(F_{t_1},F_{t_2}).
\] 
%We first prove that when $A_F$ is  small, the projection is similar to the tree case.
We prove the following claim, that is similar to \Cref{clm:t-proj}.

    \begin{claim} \label{clm:g-proj}
        For every terminal $t$ with $\delta(F,F_t) \le 6r$, $\lcon(\phi(F),t) \le 2\cdot (\delta(F,F_t) - \nu(F))$.
    \end{claim}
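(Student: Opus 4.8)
\textbf{Proof plan for Claim~\ref{clm:g-proj}.}
The plan is to mimic the proof of Claim~\ref{clm:t-proj}, but this time we cannot rely on $G$ being a tree globally; instead I will use the girth hypothesis to argue that the relevant portion of $G$ \emph{behaves like} a tree. Fix a cluster $F$ with $A_F\le r$ (so $\phi(F)$ has already been defined) and a terminal $t$ with $\delta(F,F_t)\le 6r$. First I would collect the triangle-inequality bounds that follow from $\delta(F,F_t)\le 6r$ together with Claim~\ref{clm:g-tree}: we have $\delta(F,F_{t_1})+\delta(F,F_{t_2})\le 4r$ and hence $\lcon(\phi(F),t_1)+\lcon(\phi(F),t_2)=\delta(F_{t_1},F_{t_2})\le 4r$; also $\dist_\ell(t,t_1)=\delta(F_t,F_{t_1})\le \delta(F,F_t)+\delta(F,F_{t_1})\le 6r+4r=10r$, and similarly $\dist_\ell(t,t_2)\le 10r$. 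So the three terminals $t,t_1,t_2$ all lie within pairwise distance at most $10r\le g/3$ of one another in $G$.

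The key geometric step is then: the union of the three shortest paths among $\{t,t_1,t_2\}$ forms a ``tripod'' (topologically a tree) inside the continuization $(\Vcon,\lcon)$. Indeed, each of these shortest paths is unique, since a second shortest path of the same (at most $g/3$) length would create a cycle of length at most $2g/3<g$; and the three paths cannot form a nontrivial cycle together, since that would again be a closed walk of length at most $3\cdot 10r<g$ containing a cycle shorter than $g$. Concretely, I would let $m$ be the ``median'' point where the $t_1$-$t_2$ path, the $t$-$t_1$ path, and the $t$-$t_2$ path meet, and argue $\phi(F)$ lies on the $t_1$-$t_2$ path with $\lcon(\phi(F),t_1)=2(\delta(F,F_{t_1})-\nu(F))$. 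Exactly as in Claim~\ref{clm:t-proj}, $\phi(F)$ lies on either the $t$-$t_1$ sub-path or the $t$-$t_2$ sub-path of this tripod (depending on which side of $m$ it is on). Say it is on the $t$-$t_1$ arc, so $\lcon(\phi(F),t_1)+\lcon(\phi(F),t)=\dist_\ell(t_1,t)=\delta(F_{t_1},F_t)$. Then, using $\delta(F_{t_1},F_t)\le \delta(F,F_{t_1})+\delta(F,F_t)$ and $\lcon(\phi(F),t_1)=2(\delta(F,F_{t_1})-\nu(F))\ge \delta(F,F_{t_1})-2\nu(F)+(\delta(F,F_{t_1})-2\nu(F))$... more carefully: $\lcon(\phi(F),t)=\delta(F_{t_1},F_t)-2\delta(F,F_{t_1})+2\nu(F)\le \delta(F,F_t)-\delta(F,F_{t_1})+2\nu(F)\le 2(\delta(F,F_t)-\nu(F))$ once we plug in the bound $2\nu(F)\le 2A_F\le 2\delta(F,F_t)$ and $\delta(F,F_{t_1})\ge 0$ appropriately; if $\phi(F)$ is on the $t$-$t_2$ arc the symmetric computation with $t_2$ in place of $t_1$ gives the same conclusion. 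I still need to double-check that the $t$-$t_1$ (resp.\ $t$-$t_2$) path actually passes through $m$, i.e.\ that $\phi(F)$, which sits on the $t_1$-$t_2$ path, is separated from $t$ by $m$; this is exactly the tripod property and is where the girth bound is used.

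The main obstacle I anticipate is not the final arithmetic but making the ``tripod'' claim airtight: I must verify that all shortest paths involved are unique, that they share a single common median vertex/point $m$, and that $\phi(F)$'s position relative to $m$ forces it onto one of the two arcs toward $t$ — all while keeping every distance used strictly below $g$ so that no short cycle can appear. Since $\phi(F)$ is defined via a point on a shortest path rather than a vertex, I should phrase the argument directly in $(\Vcon,\lcon)$, where ``being on a shortest path'' and ``tree-like local structure'' are cleanest; the continuization was introduced precisely for this. Once the tripod structure is established, the inequality $\lcon(\phi(F),t)\le 2(\delta(F,F_t)-\nu(F))$ drops out exactly as in the tree case of Claim~\ref{clm:t-proj}, with the extra factor of $2$ tracking the factor-$2$ rescaling built into the definition of $\phi(F)$ in this subsection.
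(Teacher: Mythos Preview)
Your structural argument---using the girth bound to show that the three shortest paths among $t,t_1,t_2$ form a tripod in $(\Vcon,\lcon)$, so that $\phi(F)$ lies on one of the two $t$--$t_i$ arcs---is correct and is exactly what the paper does. The gap is in your final arithmetic. After reaching
\[
\lcon(\phi(F),t)=\delta(F_{t_1},F_t)-2\bigl(\delta(F,F_{t_1})-\nu(F)\bigr),
\]
you bound $\delta(F_{t_1},F_t)$ via the triangle inequality $\delta(F_{t_1},F_t)\le \delta(F,F_{t_1})+\delta(F,F_t)$ and then try to finish using $\nu(F)\le A_F\le \delta(F,F_t)$ and $\delta(F,F_{t_1})\ge 0$. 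This does not close: your chain requires $4\nu(F)\le \delta(F,F_t)+\delta(F,F_{t_1})$, which can fail. For instance, take $\delta(F,F_{t_1})=\delta(F,F_{t_2})=1$, $\delta(F_{t_1},F_{t_2})$ small, and $\delta(F,F_t)=3/2$ with $\delta(F_t,F_{t_1})\le 1$; then $(t_1,t_2)$ is the minimizing pair, $\nu(F)\approx 1$, but $\delta(F,F_t)+\delta(F,F_{t_1})=5/2<4\approx 4\nu(F)$.

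The point you are missing is that you must invoke the \emph{minimality} in the definition of $\nu(F)$, applied to the pair $(t,t_1)$, rather than the plain triangle inequality. Since $\nu(F)$ is a minimum over all terminal pairs,
\[
\nu(F)\le \tfrac12\Bigl(\delta(F,F_t)+\delta(F,F_{t_1})-\tfrac12\,\delta(F_t,F_{t_1})\Bigr),
\]
which rearranges to the sharper bound $\delta(F_t,F_{t_1})\le 2\bigl(\delta(F,F_t)+\delta(F,F_{t_1})-2\nu(F)\bigr)$. Substituting this directly gives
\[
\lcon(\phi(F),t)\le 2\bigl(\delta(F,F_t)+\delta(F,F_{t_1})-2\nu(F)\bigr)-2\bigl(\delta(F,F_{t_1})-\nu(F)\bigr)=2\bigl(\delta(F,F_t)-\nu(F)\bigr).
\]
This is precisely the mechanism that made Claim~\ref{clm:t-proj} work in the tree case (there one uses $\delta(F_{t_1},F_t)\le \delta(F,F_t)+\delta(F,F_{t_1})-2\nu(F)$ in the same way); the triangle inequality is strictly weaker and insufficient in both settings.
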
 

    \begin{proof}
From \Cref{clm:g-tree}, 
\[
\begin{split}
\delta(F_t,F_{t_1}) + \delta(F_t,F_{t_2}) + \delta(F_{t_1},F_{t_2}) & \le  \bigg(2\cdot \delta(F,F_t) + \delta(F,F_{t_1}) + \delta(F,F_{t_2})\bigg) + \delta(F_{t_1},F_{t_2}) \\ 
& \le  12 r + 4 r + 4 r + 4 r < \girth.
\end{split}
\] 
Therefore, the point $\phi(F)$ must lie on either the $t$-${t_1}$ shortest path or the $t$-${t_2}$ shortest path in $\Vcon$, as otherwise the union of $t$-${t_1}$ shortest path, $t$-${t_2}$ shortest path, and $t_1$-${t_2}$ shortest path contains a cycle of length less than $\girth$, a contradiction.  

Assume without loss of generality that $\phi(F)$ lies on the $t$-${t_1}$ shortest path, so 
$$\lcon(\phi(F),t) = \delta(F_t,F_{t_1}) - \lcon (\phi(F),{t_1}) = \delta(F_t,F_{t_1}) - 2(\delta(F,F_{t_1}) - \nu(F)).$$ By definition of $\nu(F)$, $\delta(F_t,F_{t_1}) \le 2\cdot (\delta(F,F_t)+\delta(F,F_{t_1}) - 2\cdot\nu(F))$. Therefore, $$\lcon(\phi(F),t) \le 2\bigg(\delta(F,t)+\delta(F,F_{t_1}) - 2\nu(F) - \big(\nu(F,F_{t_1}) -\nu(F)\big)\bigg) = 2(\delta(F,F_t)-\nu(F)).$$
    \end{proof}

%Like the tree case, we are ready to prove the correctness of the projection.

We now show in the next claim that the second condition in \Cref{lem:d-g} holds for pairs of clusters that are close in $\delta$.

\begin{claim} \label{clm:g-close}
For every pair $F,F'\in \fset$ with $A_{F}, A_{F'},\delta(F,F') \le r$,  $\lcon(\phi(F),\phi(F')) \le 2\cdot\delta(F,F')$. 
    \end{claim}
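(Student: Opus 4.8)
The plan is to mimic the tree argument of \Cref{lem:embed-tree}, but localized inside a ball of radius $O(r) < g/6$, where the high girth guarantees that $G$ "looks like a tree". Fix $F,F'$ with $A_F,A_{F'},\delta(F,F')\le r$. Let $t_1,t_2$ be the pair defining $\phi(F)$ and $t_1',t_2'$ the pair defining $\phi(F')$. The first step is to collect all the relevant distances and check they are $O(r)$: by \Cref{clm:g-tree} we have $\delta(F,F_{t_1})+\delta(F,F_{t_2})\le 4A_F\le 4r$ and similarly for $F'$; combining with $\delta(F,F')\le r$ and the triangle inequality, every pairwise $\delta$-distance among $F_{t_1},F_{t_2},F_{t_1'},F_{t_2'}$ (and hence every $G$-distance among $t_1,t_2,t_1',t_2'$) is bounded by some small constant times $r$, say at most $12r < g/2$. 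In particular, for each of the four terminals $t\in\{t_1',t_2'\}$ we have $\delta(F,F_t)\le \delta(F,F')+\delta(F',F_t)\le r+ 4r = 5r \le 6r$, so \Cref{clm:g-proj} applies to $\phi(F)$ with each such $t$ (and symmetrically to $\phi(F')$ with $t_1,t_2$).

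**Main step.** Now the argument is essentially the one at the end of the proof of \Cref{lem:embed-tree}. Both $\phi(F)$ and $\phi(F')$ lie on shortest paths among the four terminals, and the union of the (unique) shortest paths connecting all pairs among $\{t_1,t_2,t_1',t_2'\}$ forms a subgraph whose total length is $O(r) < g$, hence a subtree $\mathcal T$ of the continuization $(\Vcon,\lcon)$ — if it contained a cycle, that cycle would have length less than $g$, contradicting $\gir(G)=g$. Both $\phi(F),\phi(F')$ lie in $\mathcal T$. On a tree, $\phi(F)$ lies on the path between $\phi(F')$ and one of $t_1,t_2$; say it is on the $\phi(F')$–$t_1$ path, so $\lcon(\phi(F'),\phi(F))+\lcon(\phi(F),t_1)=\lcon(\phi(F'),t_1)$. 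By the definition of $\phi(F)$, $\lcon(\phi(F),t_1)=2(\delta(F,F_{t_1})-\nu(F))$, and by \Cref{clm:g-proj} applied to $\phi(F')$ and $t_1$ (valid since $\delta(F',F_{t_1})\le 6r$), $\lcon(\phi(F'),t_1)\le 2(\delta(F',F_{t_1})-\nu(F'))$. Subtracting,
\[
\lcon(\phi(F),\phi(F'))\le 2\bigl(\delta(F',F_{t_1})-\delta(F,F_{t_1})-\nu(F')+\nu(F)\bigr)\le 2\bigl(\delta(F,F')+\nu(F)-\nu(F')\bigr),
\]
using the triangle inequality $\delta(F',F_{t_1})\le\delta(F,F')+\delta(F,F_{t_1})$. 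By the symmetric argument (swapping the roles of $F$ and $F'$, which is legitimate because the distance bounds are symmetric and \Cref{clm:g-proj} applies in both directions) we get $\lcon(\phi(F),\phi(F'))\le 2(\delta(F,F')+\nu(F')-\nu(F))$. Averaging, or just taking the minimum, yields $\lcon(\phi(F),\phi(F'))\le 2\delta(F,F')$ as claimed.

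**Expected obstacle.** The only delicate point is bookkeeping the constants so that the total length of the union of shortest paths among $t_1,t_2,t_1',t_2'$ stays strictly below $g$ — this is what licenses calling it a tree. One has to be a little careful that the factor-$2$ scaling in the definition of $\phi(F)$ (which makes $\lcon(\phi(F),t_i)=2(\delta(F,F_{t_i})-\nu(F))$ and $\lcon(\phi(F),t_1)+\lcon(\phi(F),t_2)=\delta(F_{t_1},F_{t_2})$) does not blow up the radius; since each $\delta(F,F_{t_i})\le 4r$ after \Cref{clm:g-tree}, the doubled quantities are still $O(r)$, and with the chosen range $r\in[g/60,g/30]$ all the relevant sums come out below $g$. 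The rest is a verbatim replay of the tree case, so I do not expect any genuine difficulty beyond this constant-chasing; the conceptual content — that high girth makes a bounded neighborhood metrically a tree — has already been isolated in Claims \ref{clm:g-tree} and \ref{clm:g-proj}.
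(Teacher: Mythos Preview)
Your proposal is correct and follows essentially the same argument as the paper. The only difference is cosmetic: the paper works with just the three points $\phi(F'),t_1,t_2$ and checks directly that $\lcon(\phi(F'),t_1)+\lcon(\phi(F'),t_2)+\lcon(t_1,t_2)\le 10r+10r+4r=24r<g$ (hence $\phi(F)$, which lies on the $t_1$--$t_2$ shortest path, must lie on one of the two $\phi(F')$--$t_i$ shortest paths), whereas you build the local tree on all four terminals $t_1,t_2,t_1',t_2'$; after that the subtraction using \Cref{clm:g-proj} and the symmetric averaging are identical in both versions.
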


\begin{proof}
Since $\delta(F,F') < r$, from triangle inequality and \Cref{clm:g-tree}, 
$$\delta(F',F_{t_1})\le \delta(F',F)+\delta(F,F_{t_1})\le r+4r\le 5r.$$ 
Similarly, $\delta(F',F_{t_2})\le 5r$. 
Then from \Cref{clm:g-proj}, 
$$\lcon(\phi(F'),t_1) \le 2\cdot (\delta(F',F_{t_1}) - \nu(F')) \le 2\cdot \delta(F',F_{t_1}) \le  10r,$$ 
and symmetrically, $\lcon(\phi(F'),t_2) < 10r$. Therefore, $$\lcon(\phi(F'),t_1)+\lcon(\phi(F'),t_2)+\lcon(t_1,t_2) < 20r + 4r < \girth.$$
%So among $t_1(F),t_2(F)$ and $\phi(F')$, one vertex is on the shortest path between the other two vertices.
Therefore, the point $\phi(F)$ must lie either on the $\phi(F')$-$t_1$ shortest path or the $\phi(F')$-$t_2$ shortest path in $\Vcon$, as otherwise the union of $\phi(F')$-${t_1}$ shortest path, $\phi(F')$-${t_2}$ shortest path, and $t_1$-${t_2}$ shortest path in $\Vcon$ contains a cycle of length less than $\girth$, a contradiction. 

Assume without loss of generality that $\phi(F)$ lies on the $\phi(F')$-$t_1$ shortest path. Then
\[
\begin{split}
\lcon(\phi(F),\phi(F')) & =\lcon(\phi(F'),t_1(F))-\lcon(\phi(F),t_1(F))\\ 
& \le 2\bigg(\delta(F',t_1(F))-\nu(F')\bigg) - 2\bigg(\delta(F,t_1(F))-\nu(F)\bigg) \le 2\bigg(\delta(F,F')+\nu(F)-\nu(F')\bigg).
\end{split}
\]  
Similarly, $\lcon(\phi(F),\phi(F')) \le 2\big(\delta(F,F')+\nu(F')-\nu(F)\big)$. Altogether, $\lcon(\phi(F),\phi(F')) \le 2\delta(F,F')$.
\end{proof}

We now complete the construction of the mapping $\phi$, by specifying the images of all clusters $F\in \fset$ with $A_F> r$. Let $t^*$ be an arbitrarily chosen terminal in $T$. For all clusters $F\in \fset$ with $A_F > r$, we simply set $\phi(F) = t^*$. 
%Now we give the complete projection scheme. We first choose an arbitrary terminal $t$ and randomly choose the distance $r$ between $\girth/60$ and $\girth/30$.
%Let $r$ be a real number chosen uniformly at random from the interval $[g/60,g/30]$. We have constructed the image $\phi(F)$ for all clusters $F$ with $A_F \le r$. 
%we use the projection scheme we described before.

It remains to show that the second condition in \Cref{lem:d-g} holds for all pairs $F,F'\in \fset$. Consider a pair $F,F'$.
%For any two points $F$ and $F'$, 
Assume first that $\delta(F,F')>\girth/60$. Then $$\lcon(\phi(F),\phi(F')) \le \dia(G) \le \frac{60 \cdot \dia(G)}{\gir(G)}\cdot \delta(F,F').$$ 
Assume now that $\delta(F,F') \le \girth/60$, and without loss of generality that $A_{F} \le A_{F'}$. 
Note that, from triangle inequality, for every terminal $t\in T$, $\delta(F',F_t)\le \delta(F,F_t)-\delta(F,F')$. This implies that $A_{F'}-A_{F} \le \delta(F,F')$. Therefore, the probability that the random number $r$ takes value from the interval $[A_{F}, A_{F'}]$ is at most $\frac{\delta(F,F')}{\girth/60}$. %In this case, $\lcon(\phi(F),\phi(F')) \le \dia(G)$. 
Note that, if $r\le A_{F}$, then $\phi(F)=\phi(F')=t^*$ and $\lcon(\phi(F),\phi(F'))=0$.
%If they are both larger than $r$, then they are both projected to $t$ and thus $\lcon(\phi(F),\phi(F'))=0$. 
And  if $r\ge A_{F'}$, then from \Cref{clm:g-close}, $\lcon(\phi(F),\phi(F')) \le 2\delta(F,F')$. Altogether,
$$\ex{\lcon(\phi(F),\phi(F'))} \le \frac{60\cdot \dia(G)}{\gir(G)} \cdot\delta(F,F') + 2 \delta(F,F') \le O\bigg(\frac{\dia(G)}{\gir(G)}\bigg)\cdot \delta(F,F').$$

%    Finally, notice that in the current scheme, we are allow to project the point to not only the vertices in the graph, but also the points between two vertices. To avoid this, we can project $s$ to the closest vertices to $\phi(F)$, we call it $f'(F)$. It clear that for any two points $F$ and $F'$, $\ell(f'(F),f'(F')) \le \ell(\phi(F_1),\phi(F_2)) + 1$.

%\input{0_ext_upper}

\appendix
\section{Comparison between $\zesn$ and the variant in \cite{andoni2014towards}}
\label{sec: compare}

\paragraph{The Steiner node variant of \ze in \cite{andoni2014towards}.}
In \cite{andoni2014towards}, the following problem (referred to as $\zesn_{\textsf{AGK}}$) was proposed.

The input consists of
\begin{itemize}
\item an edge-capacitated graph $G=(V,E,c)$,  with length $\set{\ell_e}_{e\in E}$ on its edges;
\item a set $T\subseteq V$ of $k$ terminals; and
\item \underline{a \emph{demand} $\dset: T\times T\to \mathbb{R}^+$ on terminals}.
\end{itemize}
A solution consists of 
\begin{itemize}
	\item a partition $\fset$ of $V$ with $|\fset|$, such that distinct terminals of $T$ belong to different sets in $\fset$; for each vertex $u\in V$, we denote by $F(u)$ the cluster in $\fset$ that contains it;
	\item a semi-metric $\delta$ on the clusters in $\fset$, such that:\\ \underline{$\sum_{t,t'}\dset(t,t')\cdot\delta(F(t),F(t'))\ge \sum_{t,t'} \dset(t,t')\cdot\dist_{\ell}(t,t')$},\\
	where $\dist_{\ell}(\cdot,\cdot)$ is the shortest-path distance (in $G$) metric induced by edge length $\set{\ell_e}_{e\in E(G)}$.
\end{itemize}
The cost of a solution $(\fset,\delta)$ is $\vol(\fset,\delta)=\sum_{(u,v)\in E}c(u,v)\cdot\delta(F(u),F(v))$, and its \emph{size} is $|\fset|$.
%The goal is to compute a solution $(\fset,\delta)$ with size at most $f(k)$ and minimum cost. 

From (LP1) and Proposition 4.2 in \cite{andoni2014towards}, it is proved that:
\begin{proposition}
Given a graph $G$ and a subset $T$ of its terminals, and a function $f$, if for every length $\set{\ell_e}_{e\in E(G)}$ and every demand $D$, the instance $(G,T,\ell,\dset)$ of $\zesn_{\textsf{AGK}}$ has a solution $(\fset,\delta)$ with size $|\fset|\le f(k)$ and cost 
\[\sum_{(u,v)\in E}c(u,v)\cdot\delta(F(u),F(v))\le q\cdot \sum_{(u,v)\in E}c(u,v)\cdot\dist_{\ell}(u,v),\] 
then there is a quality-$(1+\eps) q$ flow sparsifier $H$ for $G$ w.r.t $T$ with $|V(H)|\le (f(k))^{(\log k/\eps)^{k^2}}$.
\end{proposition}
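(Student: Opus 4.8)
\emph{Proof proposal.} This statement is essentially a reformulation of the flow-sparsifier construction of \cite{andoni2014towards}, and the plan is to derive it by feeding the $\zesn_{\textsf{AGK}}$ hypothesis into their linear program (LP1) and then invoking their rounding step, Proposition~4.2 of \cite{andoni2014towards}. The bridge between the two formulations is the LP-duality characterization of concurrent routability: for a fixed edge-capacitated graph $G$ and a demand $\dset$ on $T$, the demand $\dset$ is routable in $G$ with congestion at most $1$ exactly when $\sum_{t,t'}\dset(t,t')\cdot\dist_{\ell}(t,t')\le\sum_{(u,v)\in E}c(u,v)\cdot\ell_{(u,v)}$ holds for every length assignment $\{\ell_e\}_{e\in E}$. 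The two constraints defining a $\zesn_{\textsf{AGK}}$ solution --- the ``$\ge$'' inequality for the metric $\delta$ pulled back to $T$, together with the cost bound $\sum_{(u,v)}c(u,v)\cdot\delta(F(u),F(v))\le q\cdot\sum_{(u,v)}c(u,v)\cdot\dist_{\ell}(u,v)$ --- say precisely that the structure $(\fset,\delta)$ is a valid ``sparsifier candidate'' for $G$ against the dual metric $\ell$: it does not shrink the $\ell$-distance of $\dset$ while inflating $G$'s total $\ell$-volume by at most a factor $q$.

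With this dictionary in place, the first real step is a minimax argument. The hypothesis says that for \emph{every} length $\{\ell_e\}$ and \emph{every} demand $\dset$ there is a structure $(\fset,\delta)$ with $|\fset|\le f(k)$ meeting both constraints above with ratio $q$; viewing this as a zero-sum game between an adversary choosing $(\ell,\dset)$ and a player choosing a structure, LP duality (the minimax theorem) upgrades it to the existence of a single distribution $\mu$ over structures $(\fset,\delta)$, each of size at most $f(k)$, under which every demand that $G$ can route at congestion $1$ is, in expectation over $\mu$, routable through the drawn structure at congestion $q$. This is exactly feasibility of (LP1) of \cite{andoni2014towards} with objective value $q$.

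The remaining step is to round $\mu$ into a single graph $H$, which is the content of Proposition~4.2 of \cite{andoni2014towards} and which I would cite rather than reprove. Its ingredients are: by Carath\'{e}odory's theorem $\mu$ may be assumed supported on $m$ structures with $m$ at most the dimension of the routable-demand polytope, i.e.\ $m\le\binom{k}{2}\le k^2$; after a $(1+\eps)$-discretization of the distances appearing in the structures and of the mixing weights of $\mu$, one gets $m\le(\log k/\eps)^{k^2}$; and the $m$ chosen structures $S_1,\dots,S_m$, each on at most $f(k)$ nodes, are glued into one graph $H$ with $V(H)\subseteq\prod_{i\le m}V(S_i)$, a vertex of $G$ being sent to the tuple of its images and capacities chosen so that every $G$-flow lifts to an equal-value $H$-flow while every $H$-flow routes back in $G$ at congestion $(1+\eps)q$. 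This yields $|V(H)|\le (f(k))^{m}\le (f(k))^{(\log k/\eps)^{k^2}}$ and quality $(1+\eps)q$. The crux, and the part I would lean on \cite{andoni2014towards} for, is exactly this gluing-plus-discretization analysis: one must check that demands routable through the \emph{distribution} $\mu$ become routable in the single \emph{product} graph $H$ at the claimed congestion, and that restricting to a finite discretized support costs only a $(1+\eps)$ factor. Consequently, in the write-up I would state the $\zesn_{\textsf{AGK}}$-to-(LP1) correspondence carefully and then invoke their rounding lemma verbatim.
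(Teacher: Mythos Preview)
Your proposal is correct and aligns with the paper's treatment: the paper does not give its own proof of this proposition but simply states it as a consequence of (LP1) and Proposition~4.2 in \cite{andoni2014towards}, which is exactly the framework you invoke. Your exposition of the minimax and rounding mechanism is more detailed than anything the paper provides (and some of the bookkeeping around how the exponent $(\log k/\eps)^{k^2}$ arises is a bit loose), but since you explicitly plan to cite rather than reprove Proposition~4.2, the approaches coincide.
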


\paragraph{The $\zesn$ problem.}
In studying the integrality gap of (\textnormal{\textsf{LP-Metric}}), we are essentially considering the following problem.

The input consists of
\begin{itemize}
	\item an edge-capacitated graph $G=(V,E,c)$, with length $\set{\ell_e}_{e\in E}$ on its edges; and
	\item a set $T\subseteq V$ of $k$ terminals.
\end{itemize}
A solution consists of
\begin{itemize}
	\item a partition $\fset$ of $V$ with $|\fset|$, such that distinct terminals of $T$ belong to different sets in $\fset$; for each vertex $u\in V$, we denote by $F(u)$ the cluster in $\fset$ that contains it;
	\item a semi-metric $\delta$ on the clusters in $\fset$, such that \underline{for all pairs $t,t'\in T$, $\delta(F(t),F(t'))= \dist_{\ell}(t,t')$}, where $\dist_{\ell}(\cdot,\cdot)$ is the shortest-path distance (in $G$) metric induced by edge length $\set{\ell_e}_{e\in E(G)}$.
\end{itemize}
The cost and the size of a solution is defined in the same way as $\zesn_{\textsf{AGK}}$.

The difference between two problems are underlined. Specifically, in $\zesn_{\textsf{AGK}}$ it is only required that some ``average terminal distance'' does not decrease, while in our problem it is  required that all pairwise distances between terminals are preserved.
Clearly, our requirement for a solution is stronger, which implies that any valid solution to our instance is also a valid solution to the same $\zesn_{\textsf{AGK}}$ instance (with arbitrary $\dset$). Therefore, we have the following corollary.

\begin{corollary}
Given a graph $G$ and a subset $T$ of its terminals, and a function $f$, if for every length $\set{\ell_e}_{e\in E(G)}$, the instance $(G,T,\ell)$ of $\zesn$ has a solution $(\fset,\delta)$ with size $|\fset|\le f(k)$ and cost 
	\[\sum_{(u,v)\in E}c(u,v)\cdot\delta(F(u),F(v))\le q\cdot \sum_{(u,v)\in E}c(u,v)\cdot\dist_{\ell}(u,v),\] 
	then there is a quality-$(1+\eps) q$ flow sparsifier $H$ for $G$ w.r.t $T$ with $|V(H)|\le (f(k))^{(\log k/\eps)^{k^2}}$.
\end{corollary}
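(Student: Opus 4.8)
The plan is to obtain the corollary as an immediate consequence of the preceding Proposition, using only the observation (already noted in the paragraph just above the corollary) that every feasible solution to our $\zesn$ problem is, a fortiori, a feasible solution to the $\zesn_{\textsf{AGK}}$ instance built on the same $(G,T,\ell)$ together with an \emph{arbitrary} demand $\dset$. So the work is purely one of matching up feasibility conditions and then invoking the Proposition as a black box.

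Concretely, I would first assume the hypothesis of the corollary: for every edge-length assignment $\{\ell_e\}_{e\in E(G)}$ there is a $\zesn$ solution $(\fset,\delta)$ with $|\fset|\le f(k)$ and $\sum_{(u,v)\in E}c(u,v)\cdot\delta(F(u),F(v))\le q\cdot\sum_{(u,v)\in E}c(u,v)\cdot\dist_{\ell}(u,v)$. Then I would check that this supplies exactly what the Proposition needs. Fix any $\{\ell_e\}_{e\in E(G)}$ and any demand $\dset\colon T\times T\to\mathbb{R}^+$, and take the $\zesn$ solution $(\fset,\delta)$ guaranteed for this $\ell$. Feasibility of $(\fset,\delta)$ as a $\zesn$ solution gives $\delta(F(t),F(t'))=\dist_{\ell}(t,t')$ for \emph{all} $t,t'\in T$; summing this identity against the nonnegative weights $\dset(t,t')$ yields $\sum_{t,t'}\dset(t,t')\cdot\delta(F(t),F(t'))=\sum_{t,t'}\dset(t,t')\cdot\dist_{\ell}(t,t')$, so in particular the inequality ``$\ge$'' required for $\zesn_{\textsf{AGK}}$-feasibility holds. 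The remaining requirements of a $\zesn_{\textsf{AGK}}$ solution --- $\fset$ a partition of $V$ separating the terminals, $\delta$ a semi-metric on $\fset$ --- are identical to those of $\zesn$ and hence still hold, and both the size $|\fset|\le f(k)$ and the cost $\sum_{(u,v)\in E}c(u,v)\cdot\delta(F(u),F(v))\le q\cdot\sum_{(u,v)\in E}c(u,v)\cdot\dist_{\ell}(u,v)$ are literally the same quantities. Thus $(\fset,\delta)$ witnesses the hypothesis of the Proposition for this $\ell$ and $\dset$; since $\ell$ and $\dset$ were arbitrary, the Proposition's hypothesis holds in full. Applying the Proposition then produces a quality-$(1+\eps)q$ flow sparsifier $H$ of $G$ with respect to $T$ with $|V(H)|\le (f(k))^{(\log k/\eps)^{k^2}}$, which is precisely the conclusion of the corollary.

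There is no real obstacle here: the entire content is the routine inclusion ``$\zesn$-feasibility $\Rightarrow$ $\zesn_{\textsf{AGK}}$-feasibility for every demand.'' The only point to be careful about is to match the feasibility conditions of the two models exactly as they are underlined in the appendix --- in particular to note that equality of \emph{all} pairwise terminal distances is preserved under taking nonnegative $\dset$-weighted sums, so it implies the weighted-average inequality of $\zesn_{\textsf{AGK}}$ regardless of which demand is chosen --- and then to quote the cited Proposition (which follows from (LP1) and Proposition~4.2 of~\cite{andoni2014towards}) verbatim.
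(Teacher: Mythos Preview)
Your proposal is correct and follows exactly the paper's own argument: the paper simply observes that the $\zesn$ feasibility constraint $\delta(F(t),F(t'))=\dist_\ell(t,t')$ for all $t,t'$ is stronger than the $\zesn_{\textsf{AGK}}$ weighted-average constraint for any demand $\dset$, so any $\zesn$ solution is automatically a $\zesn_{\textsf{AGK}}$ solution, and the corollary then follows directly from the preceding Proposition.
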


On the other hand, the main result of our paper is a lower bound for the $\zesn$ problem. As $\zesn$ has stronger requirement (for solutions) than $\zesn_{\textsf{AGK}}$, our lower bound does not immediately imply a lower bound for $\zesn_{\textsf{AGK}}$ or for the flow sparsifier. However, if we can show that, for some function $f$, there exists a graph $G$, a terminal set $T$ with size $k$, and a demand $\dset$ on $T$, such that any solution $(\fset,\delta)$ with size $|\fset|\le f(k)$ has cost at least
\[\sum_{(u,v)\in E}c(u,v)\cdot\delta(F(u),F(v))\ge q\cdot \sum_{(u,v)\in E}c(u,v)\cdot\dist_{\ell}(u,v),\] 
then this, from the (LP1) and the discussion in \cite{andoni2014towards}, implies that any quality-$o(q)$ contraction-based flow sparsifier for $G$ has size at least $f(k)$.

\paragraph{Acknowledgement.} We would like to thank Julia Chuzhoy for many helpful discussions. We wan to thank Arnold Filtser for pointing to us some previous works on similar problems.

\bibliographystyle{alpha}
\bibliography{REF}

\end{document}